\newcommand{\FS}{\mathfrak{F}_{\text{{s}}}}
\newcommand{\C}{\mathscr{C}}
\newcommand{\Irr}{Irr}
\newcommand{\ONB}{ONB}
\newcommand{\Hil}{\mathcal{H}}
\newcommand{\Pla}{\mathfrak{P}}
\newcommand{\be}{\begin{equation}}
\newcommand{\ee}{\end{equation}}
\newcommand{\beq}{\begin{eqnarray}}
\newcommand{\eeq}{\end{eqnarray}}
\newcommand{\beqs}{\begin{eqnarray*}}
\newcommand{\eeqs}{\end{eqnarray*}}
\newcommand{\abs}[1]{\left| #1 \right|}
\theoremstyle{plain}
\newtheorem{theorem}{Theorem}[section]
\newtheorem{lemma}[theorem]{Lemma~}
\newtheorem{proposition}[theorem]{Proposition~}
\newtheorem{corollary}[theorem]{Corollary~}
\newtheorem{definition}[theorem]{Definition~}
\newtheorem*{remark}{Remark}
\renewcommand{\geq}{\geqslant}
\renewcommand{\leq}{\leqslant}
\newcommand{\lra}[1]{\left\langle#1\right\rangle}
\title{Reflection positivity and Levin-Wen models}
\author{Arthur Jaffe}
\address{Harvard University, Cambridge, MA 02138}
\email{jaffe@g.harvard.edu}
\author{Zhengwei Liu}
\address{Harvard University, Cambridge, MA 02138}
\email{zhengweiliu@fas.harvard.edu}
\begin{document}
\maketitle
\begin{abstract}
We give an algebraic formulation of our pictorial proof of the reflection positivity property for  Hamiltonians.
We apply our methods to the widely-studied Levin-Wen models and prove the reflection positivity property for a natural class of those Hamiltonians, both with respect to vacuum and to bulk excitations.
\end{abstract}

The reflection positivity property has played a central role in both mathematics and physics, as well as providing a crucial link between the two subjects. In a previous paper we gave a new geometric approach to understanding reflection positivity in terms of pictures. Here we give a transparent algebraic formulation of our pictorial approach. 
We use insights from this translation to establish the reflection positivity property for the fashionable Levin-Wen models.with respect both to vacuum and to bulk excitations. We believe these methods will be useful for understanding a variety of other problems.

\thispagestyle{empty}
\section{Introduction}
In an earlier paper~\cite{JafLiu17}, we gave a new proof of the reflection-positivity (RP) property for Hamiltonians, see Definition~\ref{Def: RP}.   We presented that  proof within the framework of a picture language~\cite{JL-Pic}. Our  language includes a geometric transformation $\FS$, that we call the string Fourier transform (SFT).  The SFT acts on pictures by rotation, and it generalizes the usual Fourier transform that acts on functions, see~\cite{JafLiu17}.  

The picture approach has a great  advantage: we find it very intuitive, illustrating the generality and geometric nature of RP.  But it also has a disadvantage, especially for readers unfamiliar with picture language: it could appear to the uninitiated as a difficult proof  to understand.  

In this paper we elaborate our previous work in two ways.  Firstly we translate our picture proof in~\cite{JafLiu17} into an algebraic proof. We begin with an algebraic formulation of $\FS$ in Definition \ref{Def: FS}. In the remainder of \S \ref{Sec: Alg RP} we prove a general theorem  about RP.  We hope that this exercise makes our pictorial proof accessible for any reader who compares the two methods. Moreover we believe that it should make clear why we find our pictorial method of proof both attractive and transparent. 

Secondly we take advantage of the generality of our pictorial method to analyze some other pictures that occur in the theoretical physics literature. 
Levin and Wen introduced a set of lattice models to study topological order~\cite{LevWen05}. These models generalize the  $\mathbb{Z}_{2}$ toric code of Kitaev~\cite{K06}; for  background see Kitaev and Kong~\cite{KitKon12}.  Levin and Wen showed that ground states of their models correspond to topological quantum field theories in the sense of Turaev and Viro~\cite{TurVir}.  In their paper, Kitaev and Kong give an interesting dictionary to translate between these two sets of concepts.  

 In \S \ref{Sec: LW} we study Levin-Wen models for graphs on surfaces, using the data of unitary fusion categories. We then use our new methods to establish Theorem~\ref{Thm:RP-LW}, the main new result in this paper:  Levin-Wen Hamiltonians have the RP property.  Although we do not analyze it in detail, our method also proves the RP property for higher-dimensional pictorial pictorial models, such as the Walker-Wang models~\cite{WW12}. 

\subsection{The Framework of our RP Proof}
We gave our pictorial proof in \cite{JafLiu17} within the framework of subfactor planar para algebras.  For background see~\cite{JafPed15,Liuex,JafJan16,JafJan17} and  the extensive citations that these papers contain to  work on RP by Osterwalder, Schrader, Biskup, Brydges, Dyson, Frank, Fr\"ohlich,    Israel,  J\"akel, Jorgensen,  Klein, Landau, Lieb, Macris,  Nachtergaele, Neeb, Olafsson, Seiler, Simon, Spencer,  and others.  

A novel aspect of the proof of RP in~\cite{JafLiu17}  was our observation that the positivity of the string Fourier transform $\FS(-H)$ of $H$ ensures the RP property. 
In fact when $H$ is reflection-invariant,  the positivity of $\FS(-H_0)$ is sufficient to ensure RP for ~$H$, where $H_0$  denotes the part of  $H$ that maps across  the reflection mirror. 

In \S \ref{Sec: Alg RP}, we present algebraic definitions of $\FS$, of  the convolution product~$*$, and of the RP property.  While this  may appear somewhat different from the standard definitions,  one can recover the results in \cite{JafLiu17} by a proper choice of the Hilbert space and the Hamiltonian. We do not pursue this comparison in this paper. We attempt  to make minimal assumptions in our statements, so that the methods here could be applied in a wide variety of circumstances.

\subsection{Our Example}
In \S \ref{Sec: LW} we consider the Levin-Wen model on a surface which has a reflection mirror. 
The Hamiltonian is an action on the Hilbert space: it is the sum of contributions from Wilson loops on plaquettes and actions on sites. The terms in $H$ arising from the actions on sites do not contribute to $H_{0}$. 
In the Levin-Wen model, $H_0$ is the sum of the actions on plaquettes that cross the reflection mirror.   

When the plaquette $p$ crosses the mirror $P$, we decompose the Wilson loop as a half circle and its mirror image. 
The action of $\FS$ on a picture is to rotate the picture by $90^{\circ}$. Pictorially we can consider the actions of the two half circles after rotation as the product of a half circle and its adjoint, namely its vertical reflection. So the $\FS(H_p)$ should be positive.  The sticking point is that the actions of the two half circles are not independent, as they share boundary conditions on the mirror. So $H_p$ is not simply a tensor product of operators on two sides of the mirror. Technically we need to take care of the boundary condition in the decomposition of $H_0$. Adding the boundary condition to the decomposition, we prove that $\FS(-H_0)$ is positive. 

Combining this work with the statements in \S \ref{Sec: Alg RP}, we obtain our main result. We remark that RP of the Hamiltonian $H$ in the Levin-Wen model on a torus not only works for the expectation in the vacuum state, but also for the expectation in bulk excitations (objects in the Drinfeld center). Each bulk excitation defines its own one-dimensional lower quantized theory that are topologically entangled on the two boundary circles. We expect this realization to be useful in the study of the anomaly theory on the boundary. 

\section{Algebraic Reflection Positivity}\label{Sec: Alg RP}
In this section we look again at results that we proved in~\cite{Liuex,JafLiu17}, using pictorial methods in the general framework of subfactor planar para algebras. Here we give purely algebraic definitions and proofs, in order to ensure that the ideas and the exposition are accessible to readers who are not familiar with picture language.  

Suppose $\Hil_+$ is a finite dimensional Hilbert space and $\Hil_-$ is its dual space. Let $\langle \cdot, \cdot \rangle_{\Hil_{\pm}}$ be the inner product of the Hilbert spaces $\Hil_{\pm}$. Let $\theta$ be the Riesz representation map from $\Hil_\pm$ to $\Hil_{\mp}$. Then for any $x,x'\in \Hil_+$, their inner product is given by
$$\langle x, x' \rangle_{\Hil_{+}}=\langle \theta(x'), \theta(x) \rangle_{\Hil_{-}}\;.$$
Let $\Hil_{-+}=\Hil_{-}\otimes\Hil_{+}$ denote the tensor product  Hilbert space with the induced inner product $\lra{\ , \ }_{\Hil_{-+}}$, and likewise denote $\Hil_{+-}=\Hil_{+}\otimes \Hil_{-}$.

\begin{definition}[\bf Reflection-Positivity Property] \label{Def: RP} The map $H \in  \hom(\Hil_{-+})$ has the RP property, if for any $x', x\in\Hil_{+}$, and any $\beta\geqslant0$, 
$$\langle \theta(x') \otimes x', e^{-\beta H} \theta(x)\otimes x \rangle_{\Hil_{-+}} \geq 0\;.$$
\end{definition}

\begin{definition}[\bf SFT]\label{Def: FS}
The string Fourier transform $\FS: \hom(\Hil_{-+}) \to  \hom(\Hil_{+-})$ is a map such that for $T \in  \hom(\Hil_{-+})$, and for arbitrary $x,x' \in \Hil_+$ and $y, y'\in \Hil_-$, 
\[
\langle  x \otimes y  , \FS(T) (x'\otimes y') \rangle_{\Hil_{+-}}= \langle  \theta(x') \otimes x  , T (y'\otimes \theta(y)) \rangle_{\Hil_{-+}}\;.
\] 
\end{definition}

\begin{remark}[\bf A Key Identity]
Definition \ref{Def: FS}, with $T=e^{-\beta H}$,  $x=x'$, and $y=y'=\theta(\tilde x)$, and  substituting $x$ for $\tilde x$, yields  
\beq\label{RP-Identity}
&&\hskip-.4in\langle \theta(x') \otimes x', e^{-\beta H} (\theta( x)\otimes  x ) \rangle_{\Hil_{-+}} \nonumber\\
&&=\langle  x' \otimes \theta( x),   \FS(e^{-\beta H}) (x' \otimes \theta( x)) \rangle _{\Hil_{+-}}\;.
\eeq
Thus the RP property for $H$ is equivalent to the positivity of the expectation of $\FS(e^{-\beta H})$ in vectors that are tensor products.
\end{remark}

\begin{theorem}[\bf First RP Statement]\label{Thm: RP1}
A transformation $H \in  \hom(\Hil_{-+})$ satisfying $\FS(-H) \geq 0$ has the RP property. 
\end{theorem}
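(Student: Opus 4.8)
The plan is to reduce the RP property to the positivity hypothesis $\FS(-H)\geq 0$ by exploiting the Key Identity \eqref{RP-Identity}. By that identity, verifying RP for $H$ amounts to showing that $\FS(e^{-\beta H})$ has non-negative expectation in every product vector $x'\otimes\theta(x)\in\Hil_{+-}$, for all $\beta\geq 0$. So the first step is to understand how $\FS$ interacts with the operator exponential: I expect that $\FS$ is not an algebra homomorphism, so $\FS(e^{-\beta H})$ is \emph{not} simply $e^{-\beta\FS(H)}$, and one cannot argue naively. Instead, the right object to track is the convolution product $*$ mentioned in the introduction; the natural guess is that $\FS$ intertwines ordinary composition on $\hom(\Hil_{-+})$ with the convolution product on $\hom(\Hil_{+-})$, i.e. $\FS(ST)=\FS(S)*\FS(T)$, so that $\FS$ carries the exponential series $\sum \beta^n H^n/n!$ to a convolution-exponential $\sum \beta^n \FS(H)^{*n}/n!$.

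The second step is therefore to establish the two structural facts about $\FS$ and $*$ that make the argument run: (i) $\FS$ sends composition to convolution, hence $\FS(e^{-\beta H})=\exp_*(-\beta\,\FS(H))$, the exponential taken with respect to $*$; and (ii) the convolution product is \emph{positivity-preserving} in the appropriate sense — if $A\geq 0$ and $B\geq 0$ as operators (or more precisely if they are positive in the relevant Schur-product sense on product vectors), then $A*B$ again has non-negative expectation on product vectors. Fact (ii) is the analogue of the classical statement that the convolution of two positive-definite functions is positive-definite, and I expect it to follow from writing $*$ as a partial-trace/contraction of a tensor product and recognizing the resulting quadratic form as a sum of squares. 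Granting (i) and (ii), the proof concludes: $\FS(-H)\geq 0$ implies each power $\FS(-H)^{*n}$ has non-negative expectation on product vectors, hence so does the convergent sum $\exp_*(-\beta\FS(H)) = \exp_*(\beta\,\FS(-H)) = \FS(e^{-\beta H})$ for every $\beta\geq 0$; by \eqref{RP-Identity} this is exactly the RP inequality for $H$.

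The main obstacle is step (ii): proving that the convolution product preserves the relevant notion of positivity on product vectors. The difficulty is that "positivity on product vectors" is weaker than operator positivity and is not obviously stable under the contraction that defines $*$; one must choose the definition of $*$ so that a factorization $\FS(-H)=\sum_k \xi_k\xi_k^{*}$ (or the Schur-complement structure from $\FS(-H)\geq 0$) propagates correctly through the tensor contraction, and then verify that the cross-terms assemble into a genuine positive quadratic form rather than merely a real one. A secondary, more routine point is justifying the interchange of $\FS$ with the infinite series and the convolution power, which is immediate here since $\Hil_\pm$ is finite dimensional and all sums converge absolutely in norm. Once the bookkeeping of boundary indices in the definition of $*$ is set up carefully — which is precisely the point the authors flag pictorially as "taking care of the boundary condition" — the positivity propagation should be a direct computation.
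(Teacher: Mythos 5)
Your plan is essentially the paper's proof: the Key Identity plus the homomorphism property $\FS(ST)=\FS(S)*\FS(T)$ and a Schur-product theorem for $*$, proved exactly by the factorization-through-the-contraction argument you describe, give $\FS(e^{-\beta H})\geq 0$ and hence RP. The only points to tighten are that the hypothesis $\FS(-H)\geq 0$ is full operator positivity, which the factorization $S*T=\bigl(Y(\sqrt S\otimes\sqrt T)\bigr)\bigl(Y(\sqrt S\otimes\sqrt T)\bigr)^{*}$ propagates as genuine operator positivity (so the weaker ``positivity on product vectors'' you flag as the main obstacle never needs to enter), and that the zeroth term of the exponential series requires the separate, easy observation $\FS(I)\geq 0$.
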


The map  $\theta$ defines  a map from $\hom (\Hil_\pm)$ to $\hom(\Hil_{\mp})$. For $H'_{\pm}\in\hom(\Hil_{\pm})$ let  
\[
\theta(H'_\pm):=\theta H'_\pm \theta\;.
\]
Extend the definition of $\theta$ as an anti-linear map on $\Hil_{-+}$:
For any $y \otimes x \in \Hil_{-+}$,  let
\[
\theta(y \otimes x):=\theta(x) \otimes \theta(y)\in\Hil_{-+}\;.
\]
Thus
	\be
	\lra{\theta(y\otimes x), \theta(y'\otimes x')}_{\Hil_{-+}}
	= \lra{y'\otimes x', y\otimes x}_{\Hil_{-+}}\;.
	\ee

A more detailed condition on $H$ that yields  the RP property depends (as in past studies) on properties of  the part of $H$ mapping between $\Hil_{+}$ and $\Hil_{-}$.    
For $H\in \hom(\Hil_{-+})$,  let $\theta(H):=\theta H \theta\in\hom(\Hil_{-+})$.

\begin{theorem}[\bf Second RP Statement]\label{Thm: RP2}
Suppose
$$H=H_-+H_0+H_++\lambda I,$$
where $\lambda \in \mathbb{R}$, $H_+ =I_- \otimes H'_+$, for some $H'_+ \in \hom (\Hil_+)$, and where $\theta(H_+)=H_-$.
If $\FS(-H_0)\geq 0$, then $H$ has the RP property. 
\end{theorem}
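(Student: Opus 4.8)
The plan is to reduce the Second RP Statement to the First RP Statement (Theorem~\ref{Thm: RP1}) by showing that $\FS(e^{-\beta H})$ is, up to an irrelevant positive scalar and a conjugation, a positive operator on the product vectors appearing in the Key Identity~\eqref{RP-Identity}. First I would dispose of the scalar: the term $\lambda I$ contributes a factor $e^{-\beta\lambda}>0$ to $e^{-\beta H}$, so it may be dropped. Next I would split off the one-sided pieces. Since $H_+=I_-\otimes H'_+$ and $H_-=\theta(H_+)=\theta H'_+\theta\otimes I_+$ (after identifying the tensor factors appropriately), these operators act on the two sides of the mirror independently; moreover $H_0$ should be shown to commute with neither in general, so the exponential cannot simply be factored. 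Instead, the key structural observation is that $\FS$ intertwines left/right multiplication by one-sided operators with the convolution-type action, and in particular one expects a formula of the shape
\[
\FS\bigl(e^{-\beta H}\bigr) = A\,\FS\bigl(e^{-\beta H_0}\bigr)\,A^{*}
\]
for a suitable operator $A$ built from $e^{-\beta H'_+/?}$ acting on $\Hil_{+-}$ — here the boundary condition along the mirror is precisely what prevents $A$ from being a plain tensor product, which is the subtlety flagged in the introduction.

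Concretely, the steps I would carry out are: (1) Use the hypotheses on $H_\pm$ together with the defining relation of $\FS$ in Definition~\ref{Def: FS} to compute $\FS$ applied to a product $H_-^{a}H_0 H_+^{b}$ type term, establishing that conjugating $H_0$ by the one-sided exponentials on the $\Hil_{-+}$ side becomes, after $\FS$, conjugation by a single operator on the $\Hil_{+-}$ side that involves $H'_+$ on \emph{both} tensor legs because $\theta$ swaps the legs. (2) Apply the Trotter product formula (or a direct power-series manipulation) to $e^{-\beta H} = \lim (e^{-\beta H_-/n}e^{-\beta H_0/n}e^{-\beta H_+/n})^{n} e^{-\beta\lambda}$, pushing $\FS$ through; because $\FS$ is linear and the identity in step~(1) is compatible with products, one collapses the telescoping conjugations into a single $A\,\FS(e^{-\beta H_0})\,A^{*}$. (3) Since $\FS(-H_0)\ge 0$ by hypothesis and $\FS$ is linear so that $\FS(e^{-\beta H_0}) = \sum_{k\ge 0}\frac{\beta^k}{k!}\FS((-H_0)^k)$ — wait, that is not automatic; rather one uses that $\FS(-H_0)\ge 0$ implies (via functional calculus on the operator $\FS(-H_0)$, noting $\FS$ is defined so that $\FS(e^{-\beta H_0}) = e^{\beta \FS(-H_0)}$ when the relevant algebra structure is respected) $\FS(e^{-\beta H_0}) = e^{\beta\FS(-H_0)}\ge 0$. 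Hence $A\,\FS(e^{-\beta H_0})\,A^{*}\ge 0$, and in particular its expectation in any vector, including the product vectors $x'\otimes\theta(x)$, is nonnegative. (4) Invoke the Key Identity~\eqref{RP-Identity} to conclude the RP property for $H$.

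The main obstacle is step~(1)–(2): making precise the claim that $\FS$ converts one-sided left/right multiplications on $\Hil_{-+}$ into a \emph{two-sided} conjugation on $\Hil_{+-}$, and verifying that this conversion is multiplicative enough to survive the Trotter limit. This is exactly the "boundary condition" issue: the operator $A$ is not $e^{-\beta H'_+}\otimes I$ but rather genuinely entangles the two legs of $\Hil_{+-}$ through $\theta$, and one must check the placement of $\theta$'s and the order of composition very carefully against Definition~\ref{Def: FS}. A secondary, more technical point is justifying $\FS(e^{-\beta H_0}) = e^{\beta\FS(-H_0)}$ — i.e. that $\FS$ respects the exponential here; in the finite-dimensional setting this should follow once one knows $\FS$ is compatible with the relevant product (the convolution product $*$ promised in the introduction), but it deserves an explicit lemma. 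Once these are in place, positivity of $\FS(-H_0)$ propagates cleanly and Theorem~\ref{Thm: RP1} finishes the argument.
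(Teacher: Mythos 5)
Your reduction hinges on two claims, and both are where the argument breaks. First, the conjugation formula $\FS(e^{-\beta H})=A\,\FS(e^{-\beta H_0})\,A^{*}$ is not established, and the Trotter mechanism you describe cannot produce it: by Theorem~\ref{Prop: FS MC}, $\FS$ converts operator products into the convolution $*$ (i.e.\ $Y(\cdot\otimes\cdot)Y^{*}$), so the $n$ Trotter blocks become $n$ convolution factors, not compositions that could telescope. Moreover, a direct computation from Definition~\ref{Def: FS} shows the one-sided factors attach to $\FS$ on mismatched legs: $\FS(H_+T)=(H'_+\otimes I)\FS(T)$, but $\FS(TH_+)=(I\otimes\theta (H'_+)^{*}\theta)\FS(T)$ and $\FS(H_-T)=\FS(T)((H'_+)^{*}\otimes I)$. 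Hence a single block $e^{-\beta H_-/n}e^{-\beta H_0/n}e^{-\beta H_+/n}$ maps to $U\,\FS(e^{-\beta H_0/n})\,V$ with $V\neq U^{*}$ (the two factors live on different tensor legs of $\Hil_{+-}$), so not even one block is manifestly positive, and nothing assembles into a single $A(\cdots)A^{*}$. Second, the identity $\FS(e^{-\beta H_0})=e^{\beta\FS(-H_0)}$ is false in general: $\FS$ is linear but not multiplicative for the operator product; $\FS(e^{S})$ is the $*$-exponential $\FS(I)+\FS(S)+\tfrac12\FS(S)*\FS(S)+\cdots$, and its positivity when $\FS(S)\geq0$ is exactly Corollary~\ref{Cor:Positivity}, which rests on the Schur product theorem for $*$, not on functional calculus of $\FS(-H_0)$.

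For comparison, the paper's proof never exponentiates the one-sided parts at all. For $s>0$ it completes a square: $-H(s):=-H_0+s(H_- - s^{-1}I)(H_+ - s^{-1}I)=-H_0+s\,\theta(T_+)\otimes T_+$ with $T_+=H'_+-s^{-1}I$; Proposition~\ref{Prop: FS TT} and linearity of $\FS$ give $\FS(-H(s))\geq0$, Theorem~\ref{Thm: RP1} gives RP for $H(s)$, and since $H(s)+(\lambda+s^{-1})I=H-s\,\theta(H_+)H_+$, continuity in $s$ lets one send $s\to0^{+}$ to get RP for $H$. If you want to salvage a Trotter-style argument close to your intuition, the correct grouping pairs the two one-sided parts with each other rather than around $H_0$: $e^{-\beta(H-\lambda I)}=\lim_n\bigl(e^{-\beta(H_-+H_+)/n}\,e^{-\beta H_0/n}\bigr)^{n}$, where $e^{-\beta(H_-+H_+)/n}=\theta(T_+)\otimes T_+$ with $T_+=e^{-\beta H'_+/n}$, so its SFT is positive by Proposition~\ref{Prop: FS TT}; $\FS(e^{-\beta H_0/n})\geq0$ by Corollary~\ref{Cor:Positivity}; then Theorem~\ref{Prop: FS MC} and Theorem~\ref{Thm: Schur Product} give positivity of the SFT of each finite product, positivity survives the limit, and \eqref{RP-Identity} concludes. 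Either way, the mechanism is convolution-positivity, not conjugation.
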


\subsection{Algebraic Properties of the SFT}  In this section we establish  algebraic properties of $\FS$. We use them in the next section to prove  Theorem~\ref{Thm: RP1} and Theorem~\ref{Thm: RP2}.  

\begin{proposition}\label{Prop:SFT-I}
The SFT of the identity is non-negative, $$\FS(I)\geq0\;.$$ 
\end{proposition}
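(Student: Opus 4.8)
The plan is to compute $\langle z, \FS(I) z\rangle_{\Hil_{+-}}$ directly for an arbitrary vector $z \in \Hil_{+-}$ and show it is non-negative. First I would unpack Definition~\ref{Def: FS} with $T = I \in \hom(\Hil_{-+})$: for any $x,x' \in \Hil_+$ and $y,y' \in \Hil_-$ we get
\[
\langle x\otimes y, \FS(I)(x'\otimes y')\rangle_{\Hil_{+-}} = \langle \theta(x')\otimes x, \, I(y'\otimes\theta(y))\rangle_{\Hil_{-+}} = \langle \theta(x')\otimes x, \, y'\otimes\theta(y)\rangle_{\Hil_{-+}}.
\]
The inner product on $\Hil_{-+} = \Hil_-\otimes\Hil_+$ factors over the tensor product, so the right side equals $\langle \theta(x'), y'\rangle_{\Hil_-}\,\langle x, \theta(y)\rangle_{\Hil_+}$. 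Using the Riesz identity $\langle \theta(x'), y'\rangle_{\Hil_-} = \langle \theta(y'), x'\rangle_{\Hil_+}$ and similarly rewriting $\langle x,\theta(y)\rangle_{\Hil_+} = \langle y, \theta(x)\rangle_{\Hil_-}$, I would massage this into a form that exhibits $\FS(I)$ as a rank-structured positive operator — concretely, it should show that $\FS(I)$ acts as $\langle x\otimes y,\FS(I)(x'\otimes y')\rangle = \langle \theta(y'),x'\rangle_{\Hil_+}\,\langle \theta(x),y\rangle_{\Hil_-}$, which is (up to the pairing conventions) the matrix unit swapping the two tensor legs through $\theta$.

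Next I would extend by sesquilinearity to a general $z = \sum_i x_i\otimes y_i \in \Hil_{+-}$ and collect terms. The expectation $\langle z,\FS(I)z\rangle_{\Hil_{+-}}$ becomes a double sum $\sum_{i,j} \langle x_i\otimes y_i, \FS(I)(x_j\otimes y_j)\rangle$, and after substituting the factored expression above I expect it to reorganize into $\bigl\langle w, w\bigr\rangle$ for an explicit vector $w$ built from $z$ by applying $\theta\otimes\theta$ and swapping legs — that is, $w = \sum_i \theta(y_i)\otimes\theta(x_i) \in \Hil_{+-}$, using the extended anti-linear $\theta$ from the discussion preceding Theorem~\ref{Thm: RP2}. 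The point is that $\FS(I)$ is essentially (a scalar multiple of) the operator implementing the flip composed with $\theta\otimes\theta$, and such an operator always has non-negative diagonal expectations because the cross-terms assemble into a genuine norm-squared.

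The main obstacle I anticipate is purely bookkeeping: keeping the four different inner products $\langle\cdot,\cdot\rangle_{\Hil_+}$, $\langle\cdot,\cdot\rangle_{\Hil_-}$, $\langle\cdot,\cdot\rangle_{\Hil_{-+}}$, $\langle\cdot,\cdot\rangle_{\Hil_{+-}}$ straight while repeatedly invoking the Riesz identity $\langle x,x'\rangle_{\Hil_+} = \langle\theta(x'),\theta(x)\rangle_{\Hil_-}$ and its consequences, and making sure the anti-linearity is applied on the correct slot each time. There is also a small subtlety in checking that $\FS(I)$ is well-defined by Definition~\ref{Def: FS} at all — i.e., that the prescribed matrix elements are consistent — but since $\Hil_\pm$ are finite-dimensional and the pairing $\langle\theta(x')\otimes x, y'\otimes\theta(y)\rangle_{\Hil_{-+}}$ is manifestly sesquilinear in $(x',y')$ versus $(x,y)$ of the right variance, this just pins down a unique operator and poses no real difficulty. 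Once the algebra is arranged so that $\langle z,\FS(I)z\rangle_{\Hil_{+-}} = \|w\|^2 \geq 0$, the proposition follows.
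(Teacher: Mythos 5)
You compute the matrix elements correctly, and this part is exactly the paper's starting point (the paper does the same computation in an orthonormal basis). Up to a slot reversal in your second factor, the correct form is
\[
\langle x\otimes y,\FS(I)(x'\otimes y')\rangle_{\Hil_{+-}}
=\langle \theta(y'),x'\rangle_{\Hil_+}\,\langle y,\theta(x)\rangle_{\Hil_-}\;,
\]
and your $\langle\theta(x),y\rangle_{\Hil_-}$ is the complex conjugate of the second factor --- a minor bookkeeping slip. The genuine gap is in the final step, where you assert that the quadratic form reorganizes into $\|w\|^2$ with $w=\sum_i\theta(y_i)\otimes\theta(x_i)$, on the grounds that $\FS(I)$ is ``essentially the flip composed with $\theta\otimes\theta$'' and that such operators have non-negative diagonal expectations. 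Both claims are false. The map flip$\,\circ(\theta\otimes\theta)$ is anti-linear, so it cannot equal the linear operator $\FS(I)$; moreover flip-type or anti-linear maps do not in general have non-negative expectations (the swap on a tensor square is $-1$ on antisymmetric vectors, and $\langle z,Jz\rangle$ for an anti-unitary $J$ can be any phase). And since your $w$ is just the image of $z$ under an anti-unitary, $\|w\|^2=\|z\|^2$, which is not $\langle z,\FS(I)z\rangle$: for $z=x_1\otimes\theta(x_2)$ with $x_1\perp x_2$ orthonormal one gets $\langle z,\FS(I)z\rangle=|\langle x_2,x_1\rangle|^2=0$ while $\|w\|^2=1$.

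The repair is immediate from your own factored matrix element: for $z=\sum_i x_i\otimes y_i$ the double sum splits as a product of two single sums that are complex conjugates of each other, giving $\langle z,\FS(I)z\rangle_{\Hil_{+-}}=\bigl|\sum_i\langle\theta(y_i),x_i\rangle_{\Hil_+}\bigr|^2\geq 0$. In other words $\FS(I)$ is the rank-one positive operator $|\Omega\rangle\langle\Omega|$ with $\Omega=\sum_{\beta}\beta\otimes\theta(\beta)$ (it annihilates everything orthogonal to $\Omega$), not a flip; this is precisely the paper's conclusion, which in an orthonormal basis reads $\bigl|\sum_i w_{ii}\bigr|^2\geq 0$. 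With that one-line correction your argument coincides with the paper's; as written, the concluding step does not go through.
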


\begin{proof}
Let $\{x_{i}\}$ denote an orthonormal basis for $\Hil_{+}$ and $\{y_{i}\}=\{\theta(x_{i})\}$ an orthonormal basis for $\Hil_{-}$.  A vector $w\in\Hil_{+-}$ has an expansion $w=\sum_{ij}w_{ij}\,x_{i}\otimes \theta(x_{j})$.  According to Definition~\ref{Def: FS}, 	\beqs
	&&\hskip-.6in\lra{w, \FS(I)w}_{\Hil_{+-}} \\
	&=& \sum_{i,j,i',j'} \overline{w_{ij}} w_{i'j'}
	\lra{\theta(x_{i'})\otimes x_{i}   ,  \theta( x_{j'})\otimes x_{j}}_{\Hil_{-+}}\\
	&=& \abs{ \sum_{i,j} \overline{w_{ij} }\lra{x_{i}, x_{j}}_{\Hil_{+}}}^{2}
	= \abs{ \sum_{i} w_{ii}}^{2}\geq0\;,
	\eeqs
showing an arbitrary expectation of  $\FS(I)\geq0$.
\end{proof}

\begin{remark}
The RP property Definition \ref{Def: RP},  for the case  $H=0$, is a special example of an expectation of $\FS(I)$, namely 
\beqs
&&\hskip -.7in \lra{\theta(x')\otimes x',  \theta(x)\otimes x}_{\Hil_{-+}}\\
&=&   \lra{ x'\otimes \theta(x)  ,   \FS(I) (x'\otimes \theta(x))   }_{\Hil_{+-}}\\
&=&\abs{\lra{x',x}_{\Hil_{+}}}^{2} \geq0\;.
\eeqs
\end{remark}

\begin{remark}
The transformation $\FS^{-1}:  \hom\Hil_{+-} \to \hom\Hil_{-+}$ is,
\[
\langle  y \otimes x  , \FS^{-1}(S) (y'\otimes x') \rangle_{\Hil_{-+}}
= \langle  x \otimes \theta(x')  , S (\theta(y)\otimes y') \rangle_{\Hil_{+-}}\;.
\]
\end{remark}

\begin{proposition}
For any $T \in \hom(\Hil_{-+})$,
$$\FS(\theta(T))=\FS(T)^*.$$
\end{proposition}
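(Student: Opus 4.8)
The plan is to verify the identity directly from Definition~\ref{Def: FS} by computing an arbitrary matrix element of each side. Fix $x,x'\in\Hil_+$ and $y,y'\in\Hil_-$. For the left-hand side, apply Definition~\ref{Def: FS} with $T$ replaced by $\theta(T)=\theta T\theta$:
\[
\langle x\otimes y,\ \FS(\theta(T))(x'\otimes y')\rangle_{\Hil_{+-}}
=\langle \theta(x')\otimes x,\ \theta(T)(y'\otimes \theta(y))\rangle_{\Hil_{-+}}.
\]
Now I would push the outer $\theta$ across the inner product using the anti-linear reflection identity $\langle\theta(u),\theta(v)\rangle_{\Hil_{-+}}=\langle v,u\rangle_{\Hil_{-+}}$ established above, together with $\theta(y'\otimes\theta(y))=\theta(\theta(y))\otimes\theta(y')=y\otimes\theta(y')$ and $\theta(\theta(x')\otimes x)=\theta(x)\otimes x'$. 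This turns the right-hand side into
\[
\langle \theta(x')\otimes x,\ \theta T\theta(y'\otimes\theta(y))\rangle_{\Hil_{-+}}
=\langle \theta\,\theta(\theta(x')\otimes x),\ \theta T(y\otimes\theta(y'))\rangle_{\Hil_{-+}}
=\overline{\langle \theta(x)\otimes x',\ T(y\otimes\theta(y'))\rangle_{\Hil_{-+}}}\,,
\]
using $\theta^2=I$ on $\Hil_{-+}$ and anti-linearity one more time, being careful about the order of arguments.

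Next I would compute the right-hand side $\langle x\otimes y,\ \FS(T)^*(x'\otimes y')\rangle_{\Hil_{+-}}$, which by definition of the adjoint equals $\overline{\langle x'\otimes y',\ \FS(T)(x\otimes y)\rangle_{\Hil_{+-}}}$. Applying Definition~\ref{Def: FS} once more to the inner bracket with the roles of the primed and unprimed vectors swapped gives $\overline{\langle\theta(x)\otimes x',\ T(y\otimes\theta(y'))\rangle_{\Hil_{-+}}}$. Comparing with the expression obtained at the end of the previous paragraph, the two match, and since $x,x',y,y'$ were arbitrary, $\FS(\theta(T))=\FS(T)^*$.

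The only real bookkeeping hazard — and the step I expect to be the main obstacle — is keeping straight the anti-linearity of $\theta$ and the consequent reversal of the order of slots in the various inner products, since $\langle\cdot,\cdot\rangle$ is conjugate-linear in its first argument while $\theta$ is itself conjugate-linear; a sign or conjugate misplaced here would break the claimed equality. It is worth doing the reduction at the level of pure matrix elements against basis vectors $x_i$, $y_j=\theta(x_j)$ rather than general $x,y$, exactly as in the proof of Proposition~\ref{Prop:SFT-I}, so that every application of $\theta$ simply permutes basis labels and no conjugation bookkeeping is needed beyond that already implicit in Definition~\ref{Def: FS}. With that choice the computation is a short, essentially mechanical chain of substitutions, and no step requires anything beyond Definition~\ref{Def: FS} and the reflection identity for $\theta$ on $\Hil_{-+}$.
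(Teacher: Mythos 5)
Your proof is correct and follows essentially the same route as the paper's: compute an arbitrary matrix element of $\FS(\theta(T))$ via Definition~\ref{Def: FS}, move $\theta$ across the inner product using its anti-unitarity and $\theta^2=I$ (with $\theta(y'\otimes\theta(y))=y\otimes\theta(y')$ and $\theta(\theta(x')\otimes x)=\theta(x)\otimes x'$), and identify the resulting conjugated expression with the matrix element of $\FS(T)^*$. The bookkeeping you flag is handled correctly, so no changes are needed.
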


\begin{proof}
For any $x,x' \in \Hil_{+}$ and $y, y' \in \Hil_{-}$, the matrix elements of $\FS(\theta(T))$ are
\beqs
&&\hskip -.15in\langle  x \otimes y  , \FS(\theta(T)) (x'\otimes y') \rangle_{\Hil_{+-}}
= \langle  \theta(x') \otimes x  , \theta(T) (y'\otimes \theta(y)) \rangle_{\Hil_{-+}} \\
&&
= \langle  \theta(x') \otimes x  , \theta T \theta (y'\otimes \theta(y)) \rangle_{\Hil_{-+}} 
 =\langle  T \theta (y'\otimes \theta(y)), \theta  (\theta(x') \otimes x) \rangle_{\Hil_{-+}}  \\
&& =\langle  T (y\otimes \theta(y')),  \theta(x) \otimes x' \rangle_{\Hil_{-+}}   
 =\overline{\langle \theta(x) \otimes x', T (y\otimes \theta(y')) \rangle}_{\Hil_{-+}}  \\
&& =\overline{\langle  x' \otimes y'  , \FS(T) (x\otimes y) \rangle}_{\Hil_{+-}} 
 = \langle  x \otimes y  , \FS(T)^* (x'\otimes y') \rangle_{\Hil_{+-}}\;.
\eeqs
Thus the matrix elements agree as claimed.
\end{proof}

\begin{corollary}
A Hamiltonian $H\in\hom(\Hil_{-+})$ is reflection invariant, iff its SFT is hermitian on $\Hil_{+-}$. In other words,
\[
\theta(H)=H   \quad \Longleftrightarrow \quad \FS(H)=\FS(H)^{*}\;.
\]
\end{corollary}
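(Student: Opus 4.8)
The plan is to derive the corollary directly from the proposition $\FS(\theta(T))=\FS(T)^*$, applied to the Hamiltonian $T=H$. Since the proposition holds for all $T\in\hom(\Hil_{-+})$, specializing to $T=H$ gives the identity $\FS(\theta(H))=\FS(H)^*$ with no further work. The corollary is then a matter of observing that the two sides of the claimed biconditional become literally the same equation once we know $\FS$ is a bijection.

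First I would spell out the forward direction. Assume $H$ is reflection invariant, meaning $\theta(H)=H$. Applying $\FS$ to both sides and invoking the proposition yields $\FS(H)=\FS(\theta(H))=\FS(H)^*$, so $\FS(H)$ is hermitian on $\Hil_{+-}$. For the converse, suppose $\FS(H)=\FS(H)^*$. By the proposition, $\FS(\theta(H))=\FS(H)^*=\FS(H)$. Now I invoke injectivity of $\FS:\hom(\Hil_{-+})\to\hom(\Hil_{+-})$ — which follows because $\FS$ is given by a nondegenerate pairing of matrix elements in Definition~\ref{Def: FS}, and indeed its inverse $\FS^{-1}$ is exhibited explicitly in the preceding remark — to conclude $\theta(H)=H$. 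This completes both directions.

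The only point requiring care is the appeal to injectivity of $\FS$ in the converse direction: one must be sure that $\FS$ is genuinely a bijection and not merely a linear map, since otherwise $\FS(\theta(H))=\FS(H)$ would not force $\theta(H)=H$. I expect this to be the one nontrivial step, though it is mild: the explicit formula for $\FS^{-1}$ already recorded in the excerpt shows $\FS$ is invertible, hence injective, so the argument closes. Everything else is a one-line substitution into the proposition, and the proof is essentially just that observation written out in both directions.

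\begin{proof}
Apply the previous proposition with $T=H$ to obtain $\FS(\theta(H))=\FS(H)^*$. If $\theta(H)=H$, then $\FS(H)=\FS(\theta(H))=\FS(H)^*$, so $\FS(H)$ is hermitian. Conversely, if $\FS(H)=\FS(H)^*$, then $\FS(\theta(H))=\FS(H)^*=\FS(H)$; since $\FS$ is invertible, with inverse given by the formula in the remark above, $\FS$ is injective, and therefore $\theta(H)=H$.
\end{proof}
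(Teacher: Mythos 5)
Your proof is correct and is exactly the argument the paper intends: the corollary is an immediate consequence of the proposition $\FS(\theta(T))=\FS(T)^*$ specialized to $T=H$, with the converse direction closed by the invertibility of $\FS$ recorded in the preceding remark (equivalently, injectivity is visible directly from the defining matrix-element relation, since $\theta(x')\otimes x$ and $y'\otimes\theta(y)$ range over bases of $\Hil_{-+}$). Nothing is missing.
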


\begin{remark}
Pictorially we represent $\theta$ in~\cite{JafLiu17} as a horizontal reflection, $*$ as a vertical reflection, and $\FS$ as a clockwise $90^\circ$ rotation. 
\end{remark}

\begin{definition}
Let $Y: \Hil_{+-} \otimes  \Hil_{+-} \to    \Hil_{+-}$ be given by 
$$Y(x_{1} \otimes y_{1} \otimes x_{2} \otimes y_{2}):= \langle \theta(y_{1}),x_{2} \rangle_{\Hil_{+}} \,x_{1} \otimes y_{2}\;,$$
for any $x_{1} \otimes y_{1} \otimes x_{2} \otimes y_{2} \in
\Hil_{+-} \otimes  \Hil_{+-}$.
\end{definition}

\begin{lemma}\label{Lem: Y}
Let $\mathcal{B}$ be an orthonormal basis of $\Hil_+$. Then for any $x \in \Hil_+$ and $y \in \Hil_-$,
$$Y^*(x \otimes y) = \sum_{\beta \in \mathcal{B}} x \otimes \theta(\beta) \otimes \beta \otimes y\;.$$ 
Also $$YY^{*}=\dim(\Hil_{+})\, I \;,\text{ on }\Hil_{+}\otimes \Hil_{-}\;.$$

\end{lemma}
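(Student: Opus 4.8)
The plan is to compute $Y^*$ directly from the defining adjoint relation and then compose. First I would fix an orthonormal basis $\mathcal{B}$ of $\Hil_+$ (so that $\{\theta(\beta):\beta\in\mathcal{B}\}$ is an orthonormal basis of $\Hil_-$), and take a general basis vector $x_1\otimes\theta(\beta_1)\otimes x_2\otimes\theta(\beta_2)$ in $\Hil_{+-}\otimes\Hil_{+-}$ together with a general vector $x\otimes y\in\Hil_{+-}$. By definition of $Y$, its value on $x_1\otimes\theta(\beta_1)\otimes x_2\otimes\theta(\beta_2)$ is $\langle\theta(\theta(\beta_1)),x_2\rangle_{\Hil_+}\,x_1\otimes\theta(\beta_2)=\langle\beta_1,x_2\rangle_{\Hil_+}\,x_1\otimes\theta(\beta_2)$. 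I would then expand
\[
\langle x\otimes y,\,Y(x_1\otimes\theta(\beta_1)\otimes x_2\otimes\theta(\beta_2))\rangle_{\Hil_{+-}}
=\overline{\langle\beta_1,x_2\rangle_{\Hil_+}}\,\langle x,x_1\rangle_{\Hil_+}\,\langle y,\theta(\beta_2)\rangle_{\Hil_-},
\]
and recognize the right-hand side as $\langle x\otimes\theta(\beta_1)\otimes\beta_1\otimes y,\;x_1\otimes\theta(\beta_1)\otimes x_2\otimes\theta(\beta_2)\rangle$ summed over the appropriate index; more precisely, matching the matrix elements forces $Y^*(x\otimes y)=\sum_{\beta\in\mathcal{B}}x\otimes\theta(\beta)\otimes\beta\otimes y$. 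The one point requiring a little care is the conjugation coming from the anti-linearity built into $\theta$ and the inner product convention $\langle x,x'\rangle_{\Hil_+}=\langle\theta(x'),\theta(x)\rangle_{\Hil_-}$; I would check that $\langle\theta(\beta),x_2\rangle_{\Hil_-}$-type factors land on the correct side so that the formula for $Y^*$ is genuinely linear in $x\otimes y$ and independent of the choice of $\mathcal{B}$.

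For the second identity I would simply compose: for $x\otimes y\in\Hil_+\otimes\Hil_-$,
\[
YY^*(x\otimes y)=Y\Bigl(\sum_{\beta\in\mathcal{B}}x\otimes\theta(\beta)\otimes\beta\otimes y\Bigr)
=\sum_{\beta\in\mathcal{B}}\langle\theta(\theta(\beta)),\beta\rangle_{\Hil_+}\,x\otimes y
=\sum_{\beta\in\mathcal{B}}\langle\beta,\beta\rangle_{\Hil_+}\,x\otimes y,
\]
and since $\mathcal{B}$ is orthonormal each inner product is $1$, giving $\lvert\mathcal{B}\rvert\,x\otimes y=\dim(\Hil_+)\,x\otimes y$. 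Hence $YY^*=\dim(\Hil_+)\,I$ on $\Hil_+\otimes\Hil_-$.

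I do not anticipate a serious obstacle here; this is essentially a bookkeeping computation. The only thing I would watch is consistency of the $\theta$/inner-product conventions — in particular that $\theta$ is an involution ($\theta(\theta(\beta))=\beta$) in the sense used above, and that the contraction $\langle\theta(y_1),x_2\rangle_{\Hil_+}$ in the definition of $Y$ pairs the second tensor leg of the first factor with the first tensor leg of the second factor exactly as written. Once those are pinned down, both displayed computations go through verbatim.
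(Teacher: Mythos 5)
Your proposal is correct and takes essentially the same route as the paper: identify $Y^*$ by matching matrix elements of $Y$ against product (basis) vectors, then compose to get $YY^*=\dim(\Hil_+)\,I$ from orthonormality. One small bookkeeping slip: in the paper's convention the inner product is conjugate-linear in the \emph{first} argument (as used in the proof of Proposition~\ref{Prop:SFT-I}), so the scalar $\langle\beta_1,x_2\rangle_{\Hil_+}$ produced by $Y$ pulls out of the second slot \emph{without} the complex conjugate; with that bar removed your expression matches $\sum_{\beta}\langle x\otimes\theta(\beta)\otimes\beta\otimes y,\ x_1\otimes\theta(\beta_1)\otimes x_2\otimes\theta(\beta_2)\rangle$ exactly, and the final identification of $Y^*$ and the $YY^*$ computation go through as you state.
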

\begin{proof}
For any $x,x_{1},x_{2}\in \Hil_+$ and $y,y_{1},y_{2} \in \Hil_-$, and with $\Hil_{+-}^{2}=\Hil_{+-}\otimes \Hil_{+-}$, 
\begin{align*}
&\hskip-.8in\sum_{\beta \in \mathcal{B}} \langle x \otimes \theta(\beta) \otimes \beta \otimes y, x_{1} \otimes y_{1} \otimes x_{2} \otimes y_{2} \rangle_{\Hil_{+-}^{2} }\\
=&\sum_{\beta \in \mathcal{B}}  \langle \theta(\beta), y_{1} \rangle_{\Hil_{-}} 
\langle \beta, x_{2} \rangle_{\Hil_{+}} 
\langle x\otimes y, x_{1} \otimes y_{2} \rangle_{\Hil_{+-}} \\
=&\sum_{\beta \in \mathcal{B}}  
\langle \theta(y_{1}), \beta \rangle_{\Hil_{+}} 
\langle \beta, x_{2} \rangle_{\Hil_{+}} 
\langle x\otimes y, x_{1} \otimes y_{2} \rangle_{\Hil_{+-}} \\
=&\sum_{\beta \in \mathcal{B}}  
\langle \theta(y_{1}), x_{2} \rangle_{\Hil_{+}}  
\langle x\otimes y, x_{1} \otimes y_{2} \rangle_{\Hil_{+-}}  \\
=& \langle x\otimes y, Y (x_{1} \otimes y_{1} \otimes x_{2} \otimes y_{2} ) \rangle_{\Hil_{+-}} \\
=& \langle Y^{*}(x\otimes y), x_{1} \otimes y_{1} \otimes x_{2} \otimes y_{2}  \rangle_{\Hil_{+-}^{2} }
\;. 
\end{align*}
This completes the computation of $Y^{*}$.  Also  
	\[
	YY^{*} x\otimes y
	= Y \sum_{\beta\in \mathcal{B}} x \otimes \theta(\beta) \otimes \beta \otimes y
	= \left(\sum_{\beta\in \mathcal{B}} \langle \beta, \beta\rangle \right) x\otimes y\;.
	\]
Note that the $\beta$ are an orthonormal basis for $\Hil_{+}$, so the sum in parentheses equals   $\dim (\Hil_{+})$.
\end{proof}

\begin{definition}[\bf Convolution]\label{Def: convolution}
For  $A,B \in \hom(\Hil_{+-})$, their convolution product is    
$
A*B:=Y (A\otimes B) Y^* 
$.
\end{definition}

The convolution is associative, as a consequence of Lemma~\ref{Lem: Y}. 
\begin{remark}
Let  $\mathcal{B}$ be an orthonormal basis for $\Hil_{+}$ and $\theta(\mathcal{B})$ a corresponding basis for $\Hil_{-}$.  Then for $i,j\in\mathcal{B}$, the vectors $i\otimes\theta(j)$ are an orthonormal basis for $\Hil_{+-}$.  
A matrix unit $E_{ii'jj'}\in \hom(\Hil_{+-})$ is zero except  on $i'\otimes \theta(j')$ and maps that vector to the vector $i\otimes \theta(j)$.  The transformations $A,B\in \hom(\Hil_{+-})$ can be written 
\[
A = \sum_{i,i',j,j'\in\mathcal{B}}  a_{ii'jj'}  E_{ii'jj'}\;,\ \ 
B= \sum_{k,k',\ell,\ell'\in\mathcal{B}}  b_{kk'\ell\ell'}  E_{kk'\ell\ell'}\;.
 \]
One can compare the matrix elements of $AB$ with those of $A*B$, namely 
\[
	\lra{\alpha\otimes\theta(\beta),    
	(A B) \alpha'\otimes\theta(\beta')}_{\Hil_{+-}}
	=  \sum_{k,k'\in\mathcal{B}} a_{\alpha k \beta k'}\, b_{k \alpha' k' \beta'} \;,
\]
\[
	\lra{\alpha\otimes\theta(\beta),    
	(A* B) \alpha'\otimes\theta(\beta')}_{\Hil_{+-}}
	=  \sum_{k,k'\in\mathcal{B}} a_{\alpha\alpha' kk'}\, b_{kk'\beta\beta'} \;.
\]
In particular on $\Hil_{+-}$, one has  $I = \sum_{ij} E_{iijj}$ and 
\be
I*I=\dim(\Hil_{+})I\;.
\ee
In~\cite{JafLiu17} we represent $A$ and $B$ pictorially as ``two-box'' pictures. The multiplication  $AB$ is given by vertical composition of the two-box pictures, while the multiplication $A*B$ is given by the corresponding  horizontal composition of the same pictures.  
\end{remark}

\begin{theorem}[\bf SFT on Products]\label{Prop: FS MC}
The SFT maps products in $\hom(\Hil_{-+})$ to convolutions in $\hom(\Hil_{+-})$. For $S,T \in \hom(\Hil_{-+})$, 
$$\FS(ST)=\FS(S)*\FS(T)\;.$$
\end{theorem}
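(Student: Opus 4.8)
The plan is to verify the identity by computing matrix elements of both sides against arbitrary tensor-product vectors, using the defining relations of $\FS$ (Definition~\ref{Def: FS}) and of the convolution $A*B = Y(A\otimes B)Y^*$ (Definition~\ref{Def: convolution}), together with the formula for $Y^*$ from Lemma~\ref{Lem: Y}. Fix an orthonormal basis $\mathcal{B}$ of $\Hil_+$, so that $\{\beta\otimes\theta(\beta')\}_{\beta,\beta'\in\mathcal{B}}$ is an orthonormal basis of $\Hil_{+-}$. For $x,x'\in\Hil_+$ and $y,y'\in\Hil_-$, I would expand
$$
\lra{x\otimes y,\ (\FS(S)*\FS(T))(x'\otimes y')}_{\Hil_{+-}}
= \lra{Y^*(x\otimes y),\ (\FS(S)\otimes\FS(T))\, Y^*(x'\otimes y')}_{\Hil_{+-}^2}\;,
$$
and then substitute the Lemma~\ref{Lem: Y} expression $Y^*(x\otimes y)=\sum_{\beta}x\otimes\theta(\beta)\otimes\beta\otimes y$ on both sides. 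This turns the right-hand side into a double sum over $\mathcal{B}$ of products of a matrix element of $\FS(S)$ (between $x\otimes\theta(\beta)$ and $x'\otimes\theta(\beta'')$, say) and a matrix element of $\FS(T)$ (between $\beta\otimes y$ and $\beta''\otimes y'$).

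Next I would apply Definition~\ref{Def: FS} to each of those two factors, rewriting $\lra{x\otimes\theta(\beta),\FS(S)(x'\otimes\theta(\beta''))}_{\Hil_{+-}}$ as $\lra{\theta(x')\otimes x,\ S(\theta(\beta'')\otimes\beta)}_{\Hil_{-+}}$ and similarly for the $\FS(T)$ factor, $\lra{\beta\otimes y,\FS(T)(\beta''\otimes y')}_{\Hil_{+-}} = \lra{\theta(\beta'')\otimes\beta,\ T(y'\otimes\theta(y))}_{\Hil_{-+}}$. At this point the sum over the intermediate basis index $\beta''\in\mathcal{B}$, appearing as $\sum_{\beta''}\theta(\beta'')\otimes\beta''$ sandwiched between $S$ and $T$, is exactly a resolution of the identity on the $\Hil_-$ tensor factor; collapsing it replaces $S(\cdots)$ paired against $(\theta(\beta'')\otimes\beta)$ and $T$ acting on the companion vector by the single composite $ST$ acting on $y'\otimes\theta(y)$. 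Carrying this out yields $\lra{\theta(x')\otimes x,\ ST\,(y'\otimes\theta(y))}_{\Hil_{-+}}$, which by Definition~\ref{Def: FS} applied to the operator $ST\in\hom(\Hil_{-+})$ is precisely $\lra{x\otimes y,\ \FS(ST)(x'\otimes y')}_{\Hil_{+-}}$. Since the matrix elements agree on all tensor-product vectors, and such vectors span, $\FS(ST)=\FS(S)*\FS(T)$.

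The main obstacle, and the step that needs care, is the bookkeeping of which Hilbert-space tensor factor each basis vector $\beta$, $\beta''$ lives in and in which slot it sits after each application of $Y^*$ and of the $\FS$-defining relation — the $\theta$'s flip $\Hil_+\leftrightarrow\Hil_-$ and reverse the order of tensor factors, so it is easy to misplace a $\theta$ or to contract the wrong pair of indices. The clean way to manage this is to track everything at the level of scalar matrix elements indexed by $\mathcal{B}$ (as in the matrix-unit remark preceding the theorem), where the contraction $\sum_{\beta''}$ manifestly becomes ordinary matrix multiplication in the correct slot; the remaining verification is then purely routine index manipulation, and no analytic input beyond finite-dimensionality (already assumed) is required.
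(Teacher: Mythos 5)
Your argument is correct and is essentially the paper's own proof read in the opposite direction: the paper starts from $\lra{x_1\otimes y_1,\FS(ST)(x_2\otimes y_2)}_{\Hil_{+-}}$, rewrites it via Definition~\ref{Def: FS}, inserts a resolution of the identity over the basis vectors $\theta(\beta_1)\otimes\beta_2$ of $\Hil_{-+}$ between $S$ and $T$, and then uses Definition~\ref{Def: FS} again together with Lemma~\ref{Lem: Y} to reassemble $Y(\FS(S)\otimes\FS(T))Y^*$ --- the same three ingredients you use. One wording slip to fix when you write it out: the sum that collapses between $S$ and $T$ is the double sum over both intermediate indices, running over the vectors $\theta(\beta'')\otimes\beta$, i.e.\ a resolution of the identity on all of $\Hil_{-+}$, not a single-index sum $\sum_{\beta''}\theta(\beta'')\otimes\beta''$ acting on the $\Hil_-$ factor alone.
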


\begin{proof}
Let $x_{1},x_{2} \in \Hil_+$ and $y_{1}, y_{2} \in \Hil_-$. By Definition \ref{Def: FS},
\begin{align*}
&\hskip -.15in\langle  x_{1} \otimes y_{1}  , \FS(ST) (x_{2}\otimes y_{2}) \rangle_{\Hil_{+-} }
= \langle  \theta(x_{2}) \otimes x_{1}  , ST (y_{2}\otimes \theta(y_{1})) \rangle_{\Hil_{-+} }\\
=&\sum_{\beta_1,\beta_2 \in \mathcal{B}} 
\langle  \theta(x_{2}) \otimes x_{1}, S  (\theta(\beta_1) \otimes \beta_2) \rangle_{\Hil_{-+} } 
\langle \theta(\beta_1) \otimes \beta_2, T (y_{2}\otimes \theta(y_{1})) \rangle_{\Hil_{-+} } \\
=&\sum_{\beta_1,\beta_2 \in \mathcal{B}} 
\langle  x_{1} \otimes \theta(\beta_2)  , \FS(S)  (x_{2} \otimes \theta(\beta_1)) \rangle_{\Hil_{+-} } 
\langle \beta_2 \otimes y_{1} , \FS(T) (\beta_1 \otimes y_{2}) \rangle_{\Hil_{+-} } \\
=&\sum_{\beta_1,\beta_2 \in \mathcal{B}} 
\langle  x_{1} \otimes \theta(\beta_2)  \otimes \beta_2 \otimes y_{1} , 
(\FS(S) \otimes \FS(T))  (x_{2} \otimes \theta(\beta_1) \otimes \beta_1 \otimes y_{2} ) \rangle_{\Hil_{+-}^{2} } \\
=&\langle Y^* (x_{1} \otimes y_{1})  ,  (\FS(S) \otimes \FS(T)) Y^* (x_{2}\otimes y_{2}) \rangle_{\Hil_{+-}^{2} }\\
=&\langle x_{1} \otimes y_{1} , Y (\FS(S) \otimes \FS(T)) Y^* (x_{2}\otimes y_{2}) \rangle_{\Hil_{+-}} \\
=&\langle  x_{1} \otimes y_{1}  , \FS(S)*\FS(T) (x_{2}\otimes y_{2}) \rangle_{\Hil_{+-}}\;,
\end{align*}
where we infer the last three equalities from Lemma~\ref{Lem: Y} and Definition~\ref{Def: FS}. Therefore, the operators agree as claimed.
\end{proof}

\begin{theorem}[\bf Schur Product Theorem]\label{Thm: Schur Product}  Let $S, T\in  \hom(\Hil_{+-})$.
If $S\geq 0$ and $T \geq 0$, then $S*T \geq 0$.
\end{theorem}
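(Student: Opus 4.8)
The plan is to reduce the statement to the elementary fact that a tensor product of positive operators is positive, combined with the defining formula $S*T = Y(S\otimes T)Y^{*}$ from Definition~\ref{Def: convolution}. In other words, the convolution $S*T$ is nothing but the compression of $S\otimes T$ by the single operator $Y^{*}$, and compressions of positive operators are positive.

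First I would record the auxiliary fact that if $S\geq 0$ and $T\geq 0$ on $\Hil_{+-}$, then $S\otimes T\geq 0$ on $\Hil_{+-}\otimes\Hil_{+-}$. This is standard: writing $S=A^{*}A$ and $T=B^{*}B$ for suitable $A,B\in\hom(\Hil_{+-})$, one has $S\otimes T=(A\otimes B)^{*}(A\otimes B)\geq 0$; alternatively, the spectrum of $S\otimes T$ consists of products of eigenvalues of $S$ with eigenvalues of $T$, all of which are non-negative since the spaces are finite dimensional.

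Next, for an arbitrary vector $w\in\Hil_{+-}$, I would compute the expectation of $S*T$ directly from the definition of the convolution and the adjoint relation for $Y$:
\[
\langle w, (S*T)\, w \rangle_{\Hil_{+-}}
= \langle w, Y(S\otimes T)Y^{*} w \rangle_{\Hil_{+-}}
= \langle Y^{*}w,\ (S\otimes T)\,Y^{*}w \rangle_{\Hil_{+-}\otimes\Hil_{+-}}
\geq 0,
\]
where the last inequality uses $S\otimes T\geq 0$ applied to the particular vector $Y^{*}w$. Since $w$ was arbitrary, this shows $S*T\geq 0$, which is the claim.

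There is no real obstacle here: the entire content lies in recognizing the convolution as a compression, after which positivity is automatic. If desired, one could append a remark making the connection to the classical Schur (Hadamard) product theorem explicit, by invoking the matrix-element formula $\langle \alpha\otimes\theta(\beta), (A*B)\,\alpha'\otimes\theta(\beta')\rangle_{\Hil_{+-}} = \sum_{k,k'\in\mathcal{B}} a_{\alpha\alpha' kk'}\, b_{kk'\beta\beta'}$ from the preceding remark, which exhibits $A*B$ as the analogue in this setting of an entrywise product; but this is not needed for the proof.
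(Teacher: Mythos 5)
Your proposal is correct and follows essentially the same route as the paper: both rest on the identity $S*T=Y(S\otimes T)Y^{*}$ together with positivity of $S\otimes T$, the paper expressing this via square roots as $S*T=\bigl(Y(\sqrt{S}\otimes\sqrt{T})\bigr)\bigl(Y(\sqrt{S}\otimes\sqrt{T})\bigr)^{*}\geq 0$, while you phrase it as positivity of a compression. The two presentations differ only cosmetically, and your argument is complete.
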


\begin{proof}
Let  $\sqrt{S}$ and $\sqrt{T}$ denote the positive square roots of $S$ and $T$. 
By Definition \ref{Def: convolution}, one has 
$S*T=(Y (\sqrt{S} \otimes \sqrt{T})) (Y (\sqrt{S} \otimes \sqrt{T}))^* \geq 0$.
\end{proof}

\begin{corollary}[\bf Exponentials and Products]\label{Cor:Positivity}
If $\FS(S)\geq0$, then $\FS(e^{S})\geq0$.   
If $\FS(S)\geq0$ and $\FS(T)\geq0$, then $\FS(ST)\geq0$.
\end{corollary}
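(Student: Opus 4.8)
\textbf{Proof proposal for Corollary \ref{Cor:Positivity}.}

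The plan is to deduce both assertions directly from the Schur product theorem (Theorem \ref{Thm: Schur Product}) together with the fact that $\FS$ intertwines the ordinary product on $\hom(\Hil_{-+})$ with the convolution product on $\hom(\Hil_{+-})$ (Theorem \ref{Prop: FS MC}). The second assertion is immediate: if $\FS(S)\geq 0$ and $\FS(T)\geq 0$, then by Theorem \ref{Prop: FS MC} we have $\FS(ST)=\FS(S)*\FS(T)$, and this is nonnegative by Theorem \ref{Thm: Schur Product} applied with the two positive operators $\FS(S)$ and $\FS(T)$ in $\hom(\Hil_{+-})$.

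For the first assertion, I would expand the exponential as a norm-convergent power series $e^{S}=\sum_{n\geq 0}\frac{1}{n!}S^{n}$ in the finite-dimensional algebra $\hom(\Hil_{-+})$. Since $\FS$ is linear and continuous (it is a linear map between finite-dimensional spaces), $\FS(e^{S})=\sum_{n\geq 0}\frac{1}{n!}\FS(S^{n})$. An induction on $n$ using Theorem \ref{Prop: FS MC} gives $\FS(S^{n})=\FS(S)^{*n}$, the $n$-fold convolution power of $\FS(S)$; the base cases are $\FS(S^{0})=\FS(I)\geq 0$ by Proposition \ref{Prop:SFT-I} and $\FS(S^{1})=\FS(S)\geq 0$ by hypothesis. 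Then repeated application of the Schur product theorem shows each convolution power $\FS(S)^{*n}$ is nonnegative, so $\FS(e^{S})$ is a convergent sum of nonnegative operators with nonnegative coefficients, hence nonnegative.

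The only point requiring a little care is the $n=0$ term, where one needs $\FS(I)\geq 0$ rather than $\FS(I)=I$; this is exactly the content of Proposition \ref{Prop:SFT-I}, so it causes no difficulty. I do not anticipate a genuine obstacle here: everything reduces to the algebraic identity of Theorem \ref{Prop: FS MC} and the positivity-preserving property of convolution, with the passage to the infinite series justified by finite-dimensionality. One alternative that avoids series manipulation altogether is to write $e^{S}=(e^{S/m})^{m}$ for large $m$; however, this still needs $\FS(e^{S/m})\geq 0$, so it does not circumvent the power-series argument and I would stick with the direct expansion above.
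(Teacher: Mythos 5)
Your proposal is correct and follows essentially the same route as the paper: the second assertion via Theorem \ref{Prop: FS MC} combined with the Schur product theorem, and the first via the exponential power series with positive coefficients, positivity of the convolution powers $\FS(S^{n})$, and Proposition \ref{Prop:SFT-I} to handle the constant term (the paper phrases this as $\FS(e^{S}-I)\geq 0$ plus $\FS(I)\geq 0$, which is the same bookkeeping as your treatment of the $n=0$ term).
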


\begin{proof}
From Theorem \ref{Prop: FS MC}, 
$
\FS(ST) = \FS(S)*\FS(T)\;.
$
We then infer $\FS(ST)\geq0$ from Theorem \ref{Thm: Schur Product}.  Likewise $\FS(S)\geq0$ ensures $\FS(S^{n})\geq0$ for any natural number $n$.  Since $\FS$ is a linear transformation, and the exponential power series has positive coefficients, so $\FS(e^{S}-I)\geq0$.  But from  Proposition \ref{Prop:SFT-I} we know  $\FS(I)\geq0$, hence $\FS(e^{S})\geq0$.
\end{proof}

\begin{proposition}[\bf A Positivity Property]\label{Prop: FS TT}
If $T_+ \in \hom(\Hil_+)$, then
\[
\FS(\theta(T_+)\otimes T_+)\geq 0\;.
\]
\end{proposition}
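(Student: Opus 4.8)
The plan is to express $\theta(T_+)\otimes T_+$ as a product of operators on $\Hil_{-+}$ whose string Fourier transforms are already known to be non-negative, and then invoke Corollary~\ref{Cor:Positivity}. Write $T_+$ in the form $I_-\otimes T_+$ when viewed as an element of $\hom(\Hil_{-+})$, and similarly $\theta(T_+)$ stands for $\theta T_+\theta$, acting as $T_+' \otimes I_+$ in the appropriate sense. The key observation is that $\theta(T_+)\otimes T_+ = (\theta(T_+)\otimes I_+)(I_-\otimes T_+)$ as a product of two commuting operators in $\hom(\Hil_{-+})$. By Theorem~\ref{Prop: FS MC}, $\FS(\theta(T_+)\otimes T_+)=\FS(\theta(T_+)\otimes I_+)*\FS(I_-\otimes T_+)$, so by Theorem~\ref{Thm: Schur Product} it suffices to show that each of the two factors has non-negative SFT.

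For the factor $I_-\otimes T_+$, I would reduce to the case where $T_+\geq 0$: since $T_+$ is arbitrary, write $T_+ = \sum_k c_k P_k$ is not quite what we want; instead, the cleanest route is to observe that $\FS$ is linear and any $T_+\in\hom(\Hil_+)$ is a linear combination of operators of the form $\xi\langle\eta,\cdot\rangle$. Actually the more robust approach: directly compute the matrix elements of $\FS(I_-\otimes T_+)$ from Definition~\ref{Def: FS}. Taking $S=I_-\otimes T_+$, one finds
\[
\langle x\otimes y,\FS(S)(x'\otimes y')\rangle_{\Hil_{+-}}
=\langle\theta(x')\otimes x, (I_-\otimes T_+)(y'\otimes\theta(y))\rangle_{\Hil_{-+}}
=\langle\theta(x'),y'\rangle_{\Hil_-}\langle x,T_+\theta(y)\rangle_{\Hil_+}.
\]
A parallel computation handles $\FS(\theta(T_+)\otimes I_+)$, giving $\langle y,\theta(x')\rangle$-type factors paired with $\theta(T_+)$ on the other leg. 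Plugging an arbitrary $w=\sum w_{ij}\,x_i\otimes\theta(x_j)$ into $\langle w,\FS(S)w\rangle$ and using $\langle\theta(x_i),\theta(x_j)\rangle_{\Hil_-}=\langle x_j,x_i\rangle_{\Hil_+}=\delta_{ij}$, the cross terms collapse and one is left with an expression of the shape $\langle v, T_+ v\rangle_{\Hil_+}$ or $|\langle\,\cdot\,\rangle|^2$-weighted sums, which forces a sign constraint. This is exactly the mechanism of Proposition~\ref{Prop:SFT-I} and its companion remarks, so I expect the computation to mirror that proof closely.

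The main obstacle I anticipate is not any deep point but keeping the tensor-leg bookkeeping straight: $\FS$ swaps the roles of the two factors and intertwines them with $\theta$, so one must be careful about which copy of $T_+$ lands on which leg after the transform, and whether the result is genuinely $T_+\geq 0$-dependent or manifestly a sum of rank-one positives regardless of $T_+$. In fact I suspect the honest statement is that $\FS(I_-\otimes T_+)$ is non-negative only because, after the transform, it becomes (a multiple of) something like $|{\rm tr}$-pairing$|$ times $T_+$ evaluated diagonally — and if $T_+$ is not assumed positive this would fail, so the resolution must be that $\FS(\theta(T_+)\otimes T_+)$ as a \emph{whole} is positive even when neither factor's SFT is, via a direct computation. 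The safest plan is therefore: compute $\langle w,\FS(\theta(T_+)\otimes T_+)w\rangle_{\Hil_{+-}}$ directly from Definition~\ref{Def: FS} for general $w=\sum_{ij}w_{ij}\,x_i\otimes\theta(x_j)$, and show it equals $\langle Tv, Tv\rangle$ for a suitable vector $v$ and operator built from $T_+$ — concretely, it should come out to $\|{\textstyle\sum_{ij}} w_{ij}\,(\text{something involving }T_+)\|^2$, manifestly $\geq 0$. I would carry out that single explicit computation rather than trying to factor through the two-operator product.
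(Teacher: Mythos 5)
Your final plan is essentially the paper's proof: one computes the matrix elements of $\FS(\theta(T_+)\otimes T_+)$ directly from Definition~\ref{Def: FS}, finding $\langle x_i\otimes y_j,\ \FS(\theta(T_+)\otimes T_+)(x_{i'}\otimes y_{j'})\rangle_{\Hil_{+-}}=\overline{s_{i'j'}}\,s_{ij}$ with $s_{ij}=\langle x_i,T_+\theta(y_j)\rangle_{\Hil_+}$, so an arbitrary expectation equals $\bigl|\sum_{i,j}\overline{a_{ij}}\,s_{ij}\bigr|^2\geq 0$, exactly the manifest square you anticipated. You were also right to abandon your first route: for general (non-positive) $T_+$ the factors $\FS(I_-\otimes T_+)$ and $\FS(\theta(T_+)\otimes I_+)$ need not be positive, so the convolution/Schur argument is unavailable and the direct computation is the correct mechanism.
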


\begin{proof}
Let $\{x_{i}\}$ be an orthonormal basis for $\Hil_{+}$ and $\{y_{j}\}$ an orthonormal basis for $\Hil_{-}$.  Let $s_{ij}=\lra{x_{i}, T\theta(y_{j})}_{\Hil_{+}}$.   A vector $a\in\Hil_{+-}$ has the form $a=\sum_{i,j}a_{ij}\,x_{i}\otimes y_{j}$.  According to Definition~\ref{Def: FS}, the matrix elements of $\FS(\theta(T_{+})\otimes T_{+})$ on $\Hil_{+-}$ in the basis $x_{i} \otimes y_{j}$ are 
	\begin{eqnarray*}
	&&\hskip -.5in \lra{x_{i}\otimes y_{j}, \FS(\theta(T_{+})\otimes T_{+})(x_{i'}\otimes y_{j'})}_{\Hil_{+-}}\\
	&& = \lra{\theta(x_{i'})\otimes x_{i} ,  (\theta(T_{+})\otimes T_{+})
	(y_{j'}\otimes \theta(y_{j}))}_{\Hil_{-+}}\\
	&& = \lra{\theta(x_{i'}), \theta(T_{+})y_{j'}}_{\Hil_{-}}
	\lra{x_{i}, T_{+}\theta(y_{j})}_{\Hil_{+}}\\
	&& = \lra{T_{+}\theta(y_{j'}), x_{i'}}_{\Hil_{+}}
	\lra{x_{i}, T_{+}\theta(y_{j})}_{\Hil_{+}}
	= \overline{s_{i'j'}} \,s_{ij}\;.
	\end{eqnarray*}
Thus
\beqs
\lra{a, \FS(\theta(T_{+})\otimes T_{+})a}_{\Hil_{+-}}
&=& \sum_{i,j,i',j'}  \overline{a_{ij}} a_{i'j'}    \overline{s_{i'j'}} \,s_{ij} \\
&=&\left |\sum_{i,j}\overline{a_{ij}}\,s_{ij}\right|^{2}
\geq0\;,
\eeqs
to complete the proof.
\end{proof}

\subsection{Proof of the RP Property} 
We apply the above properties of $\FS$ to establish the reflection positivity property for $H$.
\begin{proof}[\bf Proof of Theorem \ref{Thm: RP1}]
We assume $\FS(-H)\geq0$, so Corollary \ref{Cor:Positivity} and $\beta\geq0$ ensures  $\FS(e^{-\beta H}) \geq 0$.  Hence \eqref{RP-Identity} is the expectation of a positive operator, which  establishes the RP property for $H$.   
\end{proof}

\begin{proof}[\bf Proof of Theorem \ref{Thm: RP2}]
See also the proof of Theorem~4.2 in~\cite{JafJan16}.
Assume $\FS(H_{0})\geq0$. For $s>0$, define 
	\beqs
	-H(s) &=& -H_0 +  s(H_{-} - s^{-1}I)(H_{+}  - s^{-1}I) \\
	&=& -H_{0} +   s\,\theta(T_+)\otimes T_{+}\;.
	\eeqs  
Here $T_+=H_{+} - s^{-1}I$.  As $H_{+}=I_- \otimes H'_{+}$  acts on $\Hil_{+}$, we infer that  $T_{+}$ satisfies the hypotheses of Proposition~\ref{Prop: FS TT}.  Hence 
$\FS(\theta(T_+)\otimes T_+) \geq 0$, and consequently $\FS(-H(s))\geq 0$.  We then conclude from Theorem~\ref{Thm: RP1}, that $H(s)$ has the RP property. Adding a constant to $H(s)$ does not affect RP, so
	$
	H(s)+(\lambda+s^{-1})I=H-s\,\theta(H_{+})H_{+}
	$ 
also has the RP property. Namely for all $x', x\in \Hil_{+}$ and all $\beta\geq 0$, 
$$\langle \theta(x') \otimes x', e^{-\beta H +s\,\beta \theta(H_{+})H_{+}} \theta(x)\otimes x \rangle_{\Hil_{-+}} \geq 0\;.$$
This representation is continuous  in $s$, also at $s=0$. So let $s \to 0+$ to ensure the RP property for $H$. 
\end{proof}

\section{Levin-Wen models}\label{Sec: LW}
In this section, we define the Levin-Wen model for graphs in surfaces using the data of unitary fusion categories. 
Our main result is proving reflection positivity for the Hamiltonian in the Levin-Wen model.

\subsection{Graphs in surfaces}
Let $M_+$ be a surface in the half space $\mathbb{R}^3_+=\{(x_1,x_2,x_3) | x_1 \geq 0\}$ with boundary $\partial M$ on the plane $P=\{(x_1,x_2,x_3)| x_1=0\}$. Let $\Gamma_+$ be an oriented graph embedded in the surface $M_+$, such that $\Gamma_+ \cap \partial M_+ =\partial \Gamma_+$, namely the boundary points of $\Gamma_+$.
Let $\theta_P$ be the reflection by the hyperplane $P$. Take $M_-=\theta_P(M_+)$. Then $\partial M_-= \partial M_+$. Let $M= M_+ \cup M_-$.
Take $\Gamma_-=\theta_P(\Gamma_+)$, and the orientation is reversed by $\theta_P$. 
Then $\partial \Gamma_-= \partial \Gamma_+$. Take $\Gamma= \Gamma_+ \cup \Gamma_-$.
Then $M$ is a closed surface and $\Gamma$ is a closed oriented graph in $M$.

Denote $E_+=E(\Gamma_+)$ to be the edges of $\Gamma_+$ and $V_+=V(\Gamma_+)$ to be the vertices of $\Gamma_+$. (The boundary points in $\partial \Gamma_+$ are not vertices of $\Gamma_+$.)
Similarly  define $E_-=E(\Gamma_-)$, $E=E(\Gamma)$, $V_-=V(\Gamma_-)$ and $V=V(\Gamma)$.
Then $V=V_+ \cup V_-$ and $V_+\cap V_-=\emptyset$.
Take $E_0=\{e \in E | e\cap P \neq \emptyset\}$, the set of edges go across the plane $P$. 
Then for any $e \in E_0$, its positive half is an edge in $E_+$ and its negative half is an edge in $E_-$. We identify the three edges as the same edge. Then $E_+ \cap E_-=E_0$, $E_+ \cup E_-=E$.

Let $s, t: E\to V$ be the source function and the target function.
For any edge $e\in E$, the end points of $e$ are  $\partial e=\{s(e), t(e)\}$ .
Since the orientation is reversed by $\theta_P$, we have 
$$s(\theta_P(e))=\theta_P \,t(e).$$

For any vertex $v \in V $, we define the set of adjacent edges $E(v)=\{e \in E | v \in \partial e\}$. The cardinality of $E(v)$ is called the degree of the vertex $v$, denoted by $|v|$.
Let $\kappa_v$ be an bijection from $\{1,2,\ldots, |v|\}$ to $E(V)$, so that the numbers go from $1$ to $|v|$ anti-clockwise around the vertices. The order $\kappa_v$ is determined by the choice of the edge $\kappa_v(1)$.
Define $\varepsilon_v(e)=+$ if $s(e)=v$;  $\varepsilon_v(e)=-$ if $t(e)=v$.

\subsection{Unitary fusion categories}
Suppose $\C$ is a unitary fusion category, (corresponding to a unitary tensor category in \cite{KitKon12}).
Let $\Irr$ be the set of irreducible objects (i.e., simple objects) of $\C$, and let $1\in Irr$ be the trivial object. Take $A=\bigoplus_{X \in \Irr} X$ and $A^n:=\otimes_{k=1}^n A$.  
For any object $X$, let $\ONB(X)$ denote an ortho-nomal basis of $\hom_{\C} (1, X)$. Let $d(X)$ be the quantum dimension of $X$. Let $1_X$ be the identity map in $\hom_{\C}(X,X)$.
Define $X^+=X$ and $X^-$ to be the dual object of $X$.
For any objects $X,Y,Z$ in $\C$,
let $\theta_{\C}: \hom_{\C}(X \otimes Y, Z) \to \hom_{\C}(Y^- \otimes X^-, Z^-) $ be the modular conjugation on $\C$. Pictorially $\theta_{\C}$ is a horizontal reflection.

Let $\cap_A$ be the co-evaluation map from $1$ to $A^2$ and $\cup_A$ be the evaluation map from $A^2$ to $1$. Then $\cup_A \cap_A =d(A)$ and $(1_A \otimes \cup_A) (\cap_A \otimes 1_{A})=1_A$.
Define $\rho: \hom_{\C} (1, A^n) \to \hom_{\C} (1, A^n)$: for  $x \in \hom_{\C} (1, A^n)$, let 
$$\rho(x)=(\cup_A  \otimes 1_{A^n}) (1_{A} \otimes x \otimes 1_A) \cap_A.$$
Pictorially, we represent $x$ as 
\raisebox{-.35cm}{\begin{tikzpicture}
\node at (0,0) {$\bullet$};
\node at (-.25,0) {$x$};
\draw (0,0)--(-.5,-.5);
\draw (0,0)--(-.25,-.5);
\draw (0,0)--(.25,-.5);
\draw (0,0)--(.5,-.5);
\node at (0,-.5) {$...$};
\end{tikzpicture}},
where the $n$ edges are all labelled by the object $A$.
Then 
\[
\rho(x):=
\raisebox{-.55cm}{\begin{tikzpicture}
\node at (0,0) {$\bullet$};
\node at (-.25,0) {$x$};
\draw (0,0)--(-.5,-.5) arc (0:-180:.25) arc (180:0:1)--++(0,-.5);
\draw (0,0)--(-.25,-.5)--++(0,-.5);
\draw (0,0)--(.25,-.5)--++(0,-.5);
\draw (0,0)--(.5,-.5)--++(0,-.5);
\node at (0,-.5) {$...$};
\end{tikzpicture}}\;.
\]
For any $y,z\in \hom_{\C}(A^2,A)$, define $C_{y,z}: \hom_{\C} (1, A^n) \to \hom_{\C} (1, A^n)$: for any $x \in \hom_{\C} (1, A^n)$, $n\geq 2$, take the algebraic expression to be 
$$C_{y,z}(x):=(y  \otimes 1_{A^{n-2}} \otimes z) (1_{A} \otimes x \otimes 1_A) \cap_A.$$
The corresponding pictorial representation is, 
\be\label{Vertex}
C_{y,z}(x)=
\raisebox{-.55cm}
{\begin{tikzpicture}
\node at (0,0) {$\bullet$};
\node at (-.25,0) {$x$};
\draw (0,0)--(-.5,-.5)--(-.75,-.75); 
\draw (-1,-.5)--(-.75,-.75)--++(0,-.25);
\node at (-1,-.75) {$y$};
\draw (-1,-.5) arc (180:0:1);
\draw (0,0)--(-.25,-.5)--++(0,-.5);
\draw (0,0)--(.25,-.5)--++(0,-.5);
\node at (0,-.5) {$...$};
\draw (0,0)--(.5,-.5)--(.75,-.75);
\draw (1,-.5)--(.75,-.75)--++(0,-.25);
\node at (.5,-.75) {$z$};
\end{tikzpicture}}
\;.
\ee

\subsection{Configuration spaces}
For every edge $e \in E$,  we define $\Hil_e=L^2(Irr)$.  Moreover, the delta functions $\delta_j$, $j \in \Irr$, form an ONB of $L^2(Irr)$.   
For every vertex $v \in V$,  we define $\Hil_v=\hom_{\C} (1, A^{|v|})$.    
\begin{definition}[\bf LW Hilbert spaces] \label{Def:LWHilbertSpaces}
Define the Hilbert spaces for the Levin-Wen model as 
\begin{align*}
\Hil_+&:=(\bigotimes_{v \in V_+} \Hil_v) \otimes (\bigotimes_{e \in E_+}  \Hil_e)\;,\\
\Hil_-&:=(\bigotimes_{v \in V_-} \Hil_v) \otimes (\bigotimes_{e \in E_-}  \Hil_e) \;, \\
\Hil&:=(\bigotimes_{v \in V} \Hil_v) \otimes (\bigotimes_{e \in E}  \Hil_e) .
\end{align*}
\end{definition}
The two Hilbert spaces $\Hil_-$ and $\Hil_+$ are dual to each other with respect to the Riesz representation $\theta$.
Define the embedding map 
\be\label{DefIota}
\iota: \Hil \to \Hil_{-+}=\Hil_{-} \boxtimes \Hil_{+}\;,
\ee 
as a multilinear extension of the map on an ONB:
\beqs
&&\hskip -2.3cm
\iota \left((\bigotimes_{v \in V} \beta_v) \otimes (\bigotimes_{e \in E}  \delta_{j(e)}) \right)\nonumber\\
&=&
(\bigotimes_{v \in V_-} \beta_v) \otimes (\bigotimes_{e \in E_-}  \delta_{j(e)}) \boxtimes 
(\bigotimes_{v \in V_+} \beta_v) \otimes (\bigotimes_{e \in E_+}  \delta_{j(e)}), 
\eeqs
for any $\beta_v \in ONB(\Hil_v)$ and any $j(e)\in \Irr$.
Extend the reflection $\theta_P$ to an anti-unitary $\theta:\Hil_+ \to \Hil_-$ as follows,
\begin{align*}
\theta\left((\bigotimes_{v \in V_+} \beta_v) \otimes (\bigotimes_{e \in E_+}  \delta_{j(e)})\right)
&=\bigotimes_{v \in V_-} \theta_{\C}(\beta_{\theta_{P}(v)}) \otimes (\bigotimes_{e \in E_+}  \delta_{j(\theta_{P}(e))})\;.
\end{align*}

\subsection{Hamiltonians}
Let $Irr^n$ denote the tensor product,
\[
Irr^n:=\{j_1 \otimes j_2 \otimes  \cdots \otimes  j_n | j_k \in Irr, 1\leq k \leq n \}\;.
\]
Define $P_{v, \vec{j} } $ to be the projection from $\hom_{\C}(1, A^{|v|})$ on to $\hom_{\C}(1, \vec{j} )$ at the vertex $v$.
Define $P_{e, j}$ to be the projection from $L^2(Irr)$ on to $\mathbb{C}\delta_{j}$ at the edge $e$.
For any $v \in V$, the action on the vertex is given by the operator $H_v$ on $\Hil$:
$$H_v=\sum_{\vec{j} \in Irr^{|v|} } P_{v, \vec{j}}    \prod_{k=1}^{|v|} P_{\kappa_v(k), j_k^{\varepsilon_v\kappa_v(k)}}\;.$$

One calls each connected component of $M\setminus \Gamma$ a plaquette. 
Let $\Pla$ be the set of plaquettes. For any $p \in \Pla$, let us denote the vertices and edges on $\partial p$ by $v_1, e_1, v_2, e_2 \ldots, v_m, e_m$ clockwise. 
For any $j \in \Irr$, the action on the plaquette is given by the operator $H_{p,j}$ on $\Hil$:
\begin{align}
H_{p,j}
=&\sum_{\vec{j} \in Irr^{|v|} } \prod_{\ell=1}^{m} P_{e_{\ell}, j_{\ell}}  \left(\sum_{k=1}^m \sum_{j'_k \in Irr} 
\sum_{y_k \in ONB \hom (j \otimes  j_k^{\varepsilon_{v_k}(e_k)},\, 
(j'_{k})^{\varepsilon_{v_k}(e_k)}} \right. \nonumber\\
&\qquad\qquad \left. d(j'_k) \prod_{k=1}^{|v|}   \rho_{v_k}^{1-\kappa_v^{-1}(e_k)}  C_{v_k, y_k, \theta_{\C}(y_{k-1})} \rho_{v_k}^{\kappa_v^{-1}(e_k)-1}  \right),\nonumber
\end{align}
where $y_0=y_n$ and $\rho_{v_k}$, $C_{v_k, y_k, \theta(y_{k-1})}$ are the actions of $\rho$ and $C_{y_k, \theta(y_{k-1})}$ at the vertex $v_k$ respectively.
Here also 
$$H_p= \sum_{j \in Irr} \frac{d(j)^2}{\mu} H_{p,j},$$
where $\displaystyle \mu=\sum_{j \in Irr}d(j)^2$ is the global dimension of $\C$. It is known that $H_p$, for $p\in \Pla$, and $H_v$, for $v\in V$ are mutually commuting projections~\cite{LevWen05,KitKon12}.
In the Levin-Wen model, the Hamiltonian $H$ on $\Hil$ is 
\be\label{LW-Hamiltonian}
H=\lambda_{\Pla}  \sum_{p\in \Pla} (1-H_{p})+ \lambda_{V} \sum_{v \in V} (1-H_v)\;,
\ee
for some $\lambda_{\Pla} \geq 0$ and $\lambda_{V}\geq 0$. 

Pictorially, the action of $H_{p,j}$ is contracting a loop labelled by $j$ in the plaquette $p$ with morphisms in $\C$ on $\partial p$: 
\begin{center}
\begin{tikzpicture}
\begin{scope}[xscale=1,shift={(-.5,0)}]
\draw (0,0) -- (1,0);
\node at (1,0-.2) {$v_n$};
\draw (1,0) --(2,1);
\node at (2+.3,1-.2) {$v_{n-1}$};
\draw (0,3) -- (1,3);
\node at (1,3+.2) {$v_5$};
\draw (1,3)--(2,2);
\node at (2+.2,2+.2) {$v_{6}$};
\node at (2,1.5) {$\vdots$};
\node at (1.5+.3,.5-.2) {$e_{n-1}$};
\node at (1.5+.2,2.5+.2) {$e_{5}$};
\end{scope}

\begin{scope}[xscale=-1, shift={(-.5,0)}]
\draw (0,0) -- (1,0);
\node at (1,0-.2) {$v_1$};
\draw (1,0) --(2,1);
\node at (2+.2,1-.2) {$v_2$};
\node at (2+.2,1.5) {$e_2$};
\draw (0,3) -- (1,3);
\node at (1,3+.2) {$v_4$};
\draw (1,3)--(2,2);
\node at (2+.2,2+.2) {$v_{3}$};
\draw (2,1)--(2,2);
\node at (.5,0-.2) {$e_n$};
\node at (1.5+.1,.5-.2) {$e_1$};
\node at (1.5+.2,2.5+.2) {$e_{3}$};
\node at (0.5,3+.2) {$e_4$};
\end{scope}
\draw (0,1.5) circle (1.3);
\draw (1.2,1.6)--(1.3,1.5)--(1.4,1.6);
\node at (1.1,1.5) {$j$};
\end{tikzpicture}
\end{center}
\noindent
The contraction is induced from the relation 
$$1_{j} \otimes 1_{j_k^{\pm}}=\sum_{j'_k \in \Irr} d(j'_k) \sum_{y_k \in ONB \hom (j \otimes  j_k^{\pm}, j'_k)} y_k^* y_k$$ in $\C$.
See \S 3 of \cite{KitKon12} for more details. Pictorially, this relation changes the shape of a pair of lines labelled by $e_i$ and $j$ and as follows:
\begin{center}
\begin{tikzpicture}
\draw (0,0)--(0,1);
\draw (.5,0)--(.5,1);
\node at (1.25,.5) {$\to$};
\draw (2,0)--(2.25,.25)--(2.5,0);
\draw (2.25,.25)--(2.25,.75);
\draw (2,1)--(2.25,.75)--(2.5,1);
\end{tikzpicture}.
\end{center}
Then around each vertex $v_i$, the shape of the picture looks like \eqref{Vertex}.

The definition of $H_{p,j}$ is independent of the choice of the starting vertex $v_1$. It is also independent of the order $\kappa_v$.
When we change the orientation of an edge $e$ in the oriented graph $\Gamma$, we replace $P_{e, X}$ by $P_{e, X^-}$. Then the operators $H_v$ and $H_{p,j}$ are not changed. So the operators are essentially independent of the orientation of the graph $\Gamma$.

\subsection{Reflection Positivity}
The main new result of this paper is the following:

\begin{theorem}[\bf RP Property for Levin-Wen Models]\label{Thm:RP-LW}
The Hamiltonian $H$ in \eqref{LW-Hamiltonian},  acting on the Hilbert space~$\Hil$ of Definition \ref{Def:LWHilbertSpaces}, has the RP property: for any $h_+, \Omega_+\in \Hil_{+}$, and $\beta \geq 0$,
\begin{align}
\langle e^{-\beta H} \iota^*(\theta (h_+) \boxtimes h_+),  \iota^*(\theta(\Omega_{+}) \boxtimes \Omega_{+}) \rangle_{\Hil} \geq 0.
\end{align}
\end{theorem}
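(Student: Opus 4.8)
The plan is to reduce the assertion to the Second RP Statement (Theorem~\ref{Thm: RP2}) by exhibiting the Levin--Wen Hamiltonian, conjugated through the embedding $\iota$, in the form $H = H_- + H_0 + H_+ + \lambda I$ and then checking the single nontrivial hypothesis $\FS(-H_0)\geq 0$. First I would transport everything to $\Hil_{-+}$: since $\iota:\Hil\to\Hil_{-+}$ is a unitary identification of $\Hil$ with $\Hil_{-}\boxtimes\Hil_{+}$ intertwining the reflection $\theta_P$ on $\Hil$ with the abstract anti-unitary $\theta$ of \S\ref{Sec: Alg RP}, the inequality to be proved is exactly $\langle\theta(\Omega_+)\otimes\Omega_+,\, e^{-\beta\, \iota H\iota^*}\,\theta(h_+)\otimes h_+\rangle_{\Hil_{-+}}\geq 0$, i.e.\ the RP property of $\iota H\iota^*$ in the sense of Definition~\ref{Def: RP}. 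So it suffices to verify the hypotheses of Theorem~\ref{Thm: RP2} for $\iota H\iota^*$.

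Next I would sort the terms of \eqref{LW-Hamiltonian}. The vertex terms $1-H_v$: each $H_v$ is supported on a single vertex $v$ and its incident edges, so for $v\in V_+$ it acts only on the $\Hil_+$ tensor factors, for $v\in V_-$ only on $\Hil_-$; grouping gives contributions $\lambda_V\sum_{v\in V_+}(1-H_v)=I_-\otimes H'_+$ and its mirror image, with $\theta$ swapping the two by the equivariance of $H_v$ under $\theta_P$ (here one uses that $\theta_{\mathcal C}$ is the modular conjugation and $H_v$ is built $\theta_{\mathcal C}$-symmetrically). The plaquette terms $1-H_p$ split according to whether $\partial p$ meets the mirror $P$: a plaquette $p$ with $\partial p\cap P=\emptyset$ lies entirely in $M_+$ or entirely in $M_-$ and contributes to $H_+$ or $H_-$ respectively, again matched up by $\theta$; a plaquette $p$ crossing $P$ contributes to $H_0$. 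The constant $\lambda = \lambda_\Pla|\Pla| + \lambda_V|V|$ absorbs the $+1$'s. Thus $\iota H\iota^* = H_-+H_0+H_++\lambda I$ with $H_+ = I_-\otimes H'_+$ and $\theta(H_+)=H_-$, so the only thing left is $\FS(-H_0)\geq 0$, equivalently (since $-H_0 = \lambda_\Pla\sum_{p\ \mathrm{crosses}\ P}(H_p - 1)$ up to the constant already moved into $\lambda$, and more carefully $-H_0 = \lambda_\Pla\sum_{p\ \mathrm{crosses}}H_p$ minus a constant, which does not affect the SFT-positivity argument once one is willing to add back a constant as in the proof of Theorem~\ref{Thm: RP2}) showing $\FS(H_p)\geq 0$ for each mirror-crossing plaquette, and then adding, since a sum of SFT-positive operators is SFT-positive by linearity of $\FS$.

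The heart of the matter --- and the expected main obstacle --- is therefore the single-plaquette positivity $\FS(H_p)\geq 0$ for a plaquette crossing the mirror. Here I would use the pictorial description given after \eqref{LW-Hamiltonian}: $H_{p,j}$ contracts a Wilson loop labelled $j$ in $p$ against morphisms on $\partial p$, and when $p$ crosses $P$ this loop decomposes as an upper half-arc in $M_+$ and its $\theta_P$-mirror lower half-arc in $M_-$, \emph{sharing the two boundary points on the edges of $\partial p$ that meet $P$}. The key algebraic point, to be proved by the graphical calculus of $\mathcal C$ and the definition of $\theta_{\mathcal C}$, is that $H_{p,j}$ can be written in the form $\sum_\alpha \theta(T_\alpha^{(j)})\,Z\,T_\alpha^{(j)}$ --- a morphism on the $\Hil_+$-side, a ``gluing'' operator $Z$ handling the shared boundary conditions along $P$ (this is exactly the role of the operator $Y$ / the convolution of Definition~\ref{Def: convolution}, or of an explicit cup/cap on the mirror edges), and the $\theta$-reflected morphism on the $\Hil_-$-side. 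Once $H_p$ is massaged into such a ``sandwiched'' form, $\FS(H_p)\geq 0$ follows by combining Proposition~\ref{Prop: FS TT} (which gives $\FS(\theta(T)\otimes T)\geq 0$), Theorem~\ref{Prop: FS MC} (SFT turns the product $\theta(T)\cdot Z\cdot T$ into the convolution $\FS(\theta(T))*\FS(Z)*\FS(T)$), and the Schur Product Theorem~\ref{Thm: Schur Product} (convolution preserves positivity), provided one checks $\FS(Z)\geq 0$ for the concrete boundary-gluing operator $Z$ --- which should be a computation of the same flavour as $\FS(I)\geq 0$ in Proposition~\ref{Prop:SFT-I}, since $Z$ is essentially a cup-cap pairing on the mirror edges. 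The delicate bookkeeping is precisely the one flagged in \S1.3: tracking how the shared boundary labels $j_\ell$ on the edges $e_\ell\subset P$ enter both halves, and making sure the rotation by $90^\circ$ (the action of $\FS$) turns ``upper half-arc $\times$ its mirror'' into ``operator $\times$ its adjoint'' without losing the gluing data. I expect this single-plaquette identity, together with verifying the vertex-term equivariance $\theta(H_v)=H_{\theta_P(v)}$, to be the only nonroutine part; everything else is the reduction above plus direct appeal to the results of \S\ref{Sec: Alg RP}.
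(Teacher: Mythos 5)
Your overall reduction is the same as the paper's: conjugate by $\iota$, write $\iota H\iota^*=H_-+H_0+H_++\lambda I$ with $H_0$ the mirror-crossing plaquette part, check $\theta(H_+)=H_-$ and $H_+=I_-\otimes H'_+$, and invoke Theorem~\ref{Thm: RP2}; the sorting of vertex terms and non-crossing plaquettes, and the freedom in where to park the constants, are fine. But your very first structural claim is wrong: $\iota$ in \eqref{DefIota} is \emph{not} a unitary identification of $\Hil$ with $\Hil_-\boxtimes\Hil_+$. Every mirror-crossing edge $e\in E_0$ contributes its factor $\Hil_e$ once to $\Hil$ but twice to $\Hil_-\boxtimes\Hil_+$ (once in each factor, since $E_0=E_+\cap E_-$), so $\iota$ is an isometry onto the ``diagonal'' subspace where the two copies of each crossing edge carry equal labels, and $\iota\iota^*\neq I$. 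This is not cosmetic: the whole difficulty of the theorem, as flagged in the introduction, is precisely that the two half-plaquettes share these edge labels, so $H_p$ is not a priori a product (or tensor product) across the mirror; if $\iota$ were unitary there would be essentially nothing to prove beyond bookkeeping, and the relation between $e^{-\beta H}$ on $\Hil$ and $e^{-\beta\iota H\iota^*}$ on $\Hil_{-+}$ also needs care once $\iota\iota^*\neq I$.

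The genuine gap is in the step you yourself call the heart of the matter, $\FS(H_p)\geq 0$ for a crossing plaquette: it is not proved, and the route you sketch would not go through as stated. You propose $H_{p,j}=\sum_\alpha\theta(T_\alpha)\,Z\,T_\alpha$ with a ``gluing'' operator $Z$, and then Theorem~\ref{Prop: FS MC} (products $\to$ convolutions) plus Theorem~\ref{Thm: Schur Product} plus $\FS(Z)\geq 0$. But the convolution/Schur argument needs SFT-positivity of each factor separately, i.e.\ $\FS(T_\alpha)\geq 0$ and $\FS(\theta(T_\alpha))\geq 0$, which you do not have and cannot expect for an individual half-plaquette operator; Proposition~\ref{Prop: FS TT} gives positivity only for the single tensor-product operator $\theta(T_+)\otimes T_+$, not factor by factor and not for a sandwiched composition $\theta(T)ZT$. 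Moreover no concrete $Z$ is exhibited, nor is $\FS(Z)\geq 0$ checked. The paper's Lemma~\ref{Lem: RP for p} handles the shared boundary without any gluing operator and without convolutions: one works with $\iota H_{p,j}\iota^*$ on the doubled space and expands it over the labels $(j_0,j_0',j_m,j_m')$ of the two mirror-crossing edges before and after the plaquette action (via the projections $P_{e_0,j_0}P_{e_m,j_m}$ and the ONB sums over the morphisms $y_0,y_m$ attaching the $j$-loop at those edges). Each summand is then an exact product $T_{j_0,j_0',j_m,j_m'}\boxtimes\theta(T_{j_0,j_0',j_m,j_m'})$ across the mirror --- the correlation is carried by the summation index, not by an operator $Z$ --- so Proposition~\ref{Prop: FS TT} applies term by term and linearity of $\FS$ gives $\FS(-H_0)\geq 0$. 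Without this (or an equivalent) boundary-label decomposition, your single-plaquette positivity remains an unproved claim, and with it the theorem.
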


\begin{lemma}\label{Lem: RP for p}
For any plaquette $p$ across the plane $P$, namely $p\cap P \neq \emptyset$, we have
$\FS (-\iota H_{p,j} \iota^*)\geq 0$.
\end{lemma}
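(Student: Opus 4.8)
The plan is to cut the Wilson loop along the mirror $P$ and reduce to the positivity already proved in \S\ref{Sec: Alg RP}, principally Proposition~\ref{Prop: FS TT}. Fix $p$ with $p\cap P\neq\emptyset$. Geometrically, $\bar p$ is a disc meeting $P$ in one or more arcs; each such arc is crossed transversally by the $j$--loop appearing in $H_{p,j}$ in an even number of points, and also meets the edges of $\partial p$ that lie in $E_0$. Cutting the $j$--loop at the points where it meets $P$ writes it as an arc $\ell^+\subset M_+$ together with its mirror image $\ell^-=\theta_P(\ell^+)\subset M_-$. The first step is to name the ``boundary data along $P$'': introduce an index $\kappa$ running over the labels carried by the strands of $\partial p$ in $E_0$, the labels at the cut points of the $j$--loop, and the morphisms $y_k$ attached at the vertices $v_k\in\partial p\cap P$ in \eqref{Vertex}. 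For each value of $\kappa$, contracting $\ell^+$ with the portion of $\partial p$ lying in $M_+$, weighted by the factors $d(j'_k)$ from the definition of $H_{p,j}$ and with the $\kappa$--data held fixed, produces an operator $T_{+,\kappa}\in\hom(\Hil_+)$; since $\ell^+$ and all the contracted morphisms live on the $+$ side, $T_{+,\kappa}$ is exactly of the kind to which Proposition~\ref{Prop: FS TT} applies.

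The second step is the key identity. Using the reflection symmetry of the whole construction --- $\theta_P$ reverses $\Gamma$, the orientation on $E_-$ is reversed, and $\theta_{\C}$ conjugates the morphisms $y_k$ --- the contraction of $\ell^-$ with $\partial p\cap M_-$ at the mirrored value of $\kappa$ is identified with $\theta(T_{+,\kappa})$. Feeding this into the definition of $H_{p,j}$ and the embedding $\iota$ of \eqref{DefIota}, and taking care that the shared edges $E_0$ force the $\kappa$--data to agree on the two sides (this is the ``boundary condition'' flagged in the introduction, incorporated by keeping $\kappa$ as an explicit summation index rather than decomposing $H_{p,j}$ as a bare tensor product), I expect an identity of the form
\[
-\,\iota\,H_{p,j}\,\iota^{*}\;=\;\sum_{\kappa} c_{\kappa}\;\theta(T_{+,\kappa})\otimes T_{+,\kappa}\;,
\]
with coefficients $c_{\kappa}$ that are products of quantum dimensions --- hence non--negative, since $\C$ is unitary --- up to an overall sign fixed by the loop normalization.

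Granting this identity, the conclusion is immediate from \S\ref{Sec: Alg RP}: by Proposition~\ref{Prop: FS TT} each summand satisfies $\FS(\theta(T_{+,\kappa})\otimes T_{+,\kappa})\geq0$, and $\FS$ is linear with $c_\kappa\geq0$, so $\FS(-\iota H_{p,j}\iota^{*})\geq0$. Should the decomposition instead leave the gluing along $E_0$ as a genuine contraction rather than a plain sum, one uses that $\FS$ sends such a contraction to a convolution (Theorem~\ref{Prop: FS MC} and Definition~\ref{Def: convolution}), invokes the Schur product theorem (Theorem~\ref{Thm: Schur Product}) on the positive operators produced by Proposition~\ref{Prop: FS TT}, and uses $\FS(I)\geq0$ (Proposition~\ref{Prop:SFT-I}) to close up; the outcome is the same.

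I expect the main obstacle to be the second step. Everything comes down to checking that, once the $E_0$--boundary data and the cut points of the $j$--loop are promoted to explicit indices, the piece of $H_{p,j}$ living on $M_-$ really is $\theta$ applied to the piece on $M_+$, and that the residual weights $c_\kappa$ are non--negative: this forces the $d(j'_k)$ factors, the conjugations $\theta_{\C}(y_{k-1})$ in \eqref{Vertex}, and the orientation reversals on $E_-$ all to line up correctly. Once that bookkeeping is settled, the algebraic input from \S\ref{Sec: Alg RP} finishes the proof with no further work.
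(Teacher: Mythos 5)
Your proposal follows essentially the same route as the paper: the paper likewise cuts the plaquette term along $P$ and writes $-\iota H_{p,j}\iota^*$ as a sum, indexed by the labels $j_0,j'_0,j_m,j'_m$ of the two edges crossing the mirror (the shared boundary condition, kept as an explicit outer summation exactly as you propose), of terms $T_{j_0,j'_0,j_m,j'_m}\boxtimes\theta(T_{j_0,j'_0,j_m,j'_m})$, and then concludes by Proposition~\ref{Prop: FS TT} together with linearity of $\FS$. The only cosmetic differences are bookkeeping ones: the paper absorbs the quantum-dimension weights and the sums over the cut morphisms $y_0,y_m$ into each $T$ rather than into your external index $\kappa$, and your convolution/Schur-product fallback is not needed.
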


\begin{proof}
For any plaquette $p$ across the plane $P$, 
let us denote the vertices and edges in $\partial p \cap \Gamma_-$ clockwise as $e_0, v_1, e_1 v_2,\ldots, v_m, e_m$; the vertices and edges in $\partial p \cap \Gamma_+$ anti-clockwise as $f_0, w_1, f_1, w_2,\ldots, w_m, f_m$. Then $w_k=\theta (v_k)$, for $1\leq k\leq m$; and $f_k=\theta(e_k)$, for $0 \leq k \leq m$. Moreover, $\varepsilon_{w_1}(f_0)=-\varepsilon_{v_1}(e_0), \varepsilon_{w_n}(f_n)=-\varepsilon_{v_n}(e_n)$. 
\begin{center}
\begin{tikzpicture}
\draw[dashed] (0,-.5) --(0,3.5); 
\node at (0,4) {$P$};

\draw (0,0) -- (1,0);
\node at (1,0-.2) {$w_1$};
\draw (1,0) --(2,1);
\node at (2+.2,1-.2) {$w_2$};
\draw (0,3) -- (1,3);
\node at (1,3+.2) {$w_m$};
\draw (1,3)--(2,2);
\node at (2+.3,2+.2) {$w_{m-1}$};
\node at (2,1.5) {$\vdots$};
\node at (.5,0-.2) {$f_0$};
\node at (1.5+.1,.5-.2) {$f_1$};
\node at (1.5+.2,2.5+.2) {$f_{m-1}$};
\node at (.5,3+.2) {$f_m$};

\begin{scope}[xscale=-1]
\draw (0,0) -- (1,0);
\node at (1,0-.2) {$v_1$};
\draw (1,0) --(2,1);
\node at (2+.2,1-.2) {$v_2$};
\draw (0,3) -- (1,3);
\node at (1,3+.2) {$v_m$};
\draw (1,3)--(2,2);
\node at (2+.2,2+.2) {$v_{m-1}$};
\node at (2,1.5) {$\vdots$};
\node at (.5,0-.2) {$e_0$};
\node at (1.5+.1,.5-.2) {$e_1$};
\node at (1.5+.2,2.5+.2) {$e_{m-1}$};
\node at (.5,3+.2) {$e_m$};
\end{scope}
\draw (0,1.5) circle (1.3);
\draw (1.2,1.6)--(1.3,1.5)--(1.4,1.6);
\node at (1.1,1.5) {$j$};
\end{tikzpicture}
\end{center}
By the definitions of $H_{p,j}$ and $\iota$, we have 
\begin{align*}
-\iota H_{p,j} \iota^*=&\sum_{j_0, j'_0,j_m,j'_m \in Irr} T_{j_0, j'_0,j_m,j'_m} \boxtimes \theta(T_{j_0, j'_0,j_m,j'_m}),
\end{align*}
where
\begin{align*}
& \hskip -.1in
T_{j_0, j'_0,j_m,j'_m}\\
&=P_{e_0,j_0} P_{e_m, j_m} \\
& \ \ \times\sum_{y_0 \in ONB \hom (j \otimes j_0^{-\varepsilon_{v_1}(e_0)} , (j'_0)^{-\varepsilon_{v_1}(e_0)})} \
\sum_{y_m \in ONB \hom (j \otimes j_0^{\varepsilon_{v_m}(e_m)} ,  (j'_m)^{\varepsilon_{v_m}(e_m)} )} \\
&\qquad \sum_{\vec{j} \in Irr^{|v|-1} } \prod_{\ell=1}^{m-1} P_{e_{\ell}, j_{\ell}}  \left(\sum_{k=1}^{m-1}\, \sum_{j'_k \in Irr} \  \sum_{y_k \in ONB \hom (j \otimes  j_k^{\varepsilon_{v_k}(e_k)}, (j'_k)^{\varepsilon_{v_k}(e_k)}  )} \right. \nonumber\\
&\hskip 1.5in \left. d(j'_k) \prod_{k=1}^{|v|}   \rho_{v_k}^{1-\kappa_v^{-1}(e_k)}  C_{v_k, y_k, \theta_{\C}(y_{k-1})} \rho_{v_k}^{\kappa_v^{-1}(e_k)-1}  \right).
\end{align*}
By Proposition \ref{Prop: FS TT},
$\FS(-\iota H_{p,j} \iota^*) \geq 0$.
\end{proof}

\begin{proof}[\bf Proof of Theorem \ref{Thm:RP-LW}]
Take $\tilde{H}=\iota H \iota^*$.
We have the decomposition 
$\tilde{H}=H_0+H_++H_-+\lambda I$,
such that
\begin{align*}
H_0&=\lambda_{\mathfrak{P}} \sum_{p\in \mathfrak{P}, p\cap P \neq \emptyset} -H_p; \\ 
H_\pm&=\lambda_{\Pla }  \sum_{p\in \Pla, p \subset M_\pm} (1-H_{p})+ \lambda_{V} \sum_{v \in V_\pm} (1-H_v); \\
\lambda&=\lambda_{\mathfrak{P}} \sum_{p\in \mathfrak{P}, p\cap P \neq \emptyset}1.
\end{align*}
Then $\theta(H_+)=H_-$ and $H_+ =I \otimes H'_+$, for some $H'_+ \in \hom (\Hil_+)$.
By Lemma \ref{Lem: RP for p}, $\FS(-H_0)\geq 0$.
By Theorem \ref{Thm: RP2}, $\tilde{H}$ has the RP property. 
For any $h_+, \Omega_+\in \Hil_+, \beta \geq 0,$
\begin{align*}
&\hskip -.5in \langle e^{-\beta H} \iota^*(\theta (h_+) \boxtimes h_+),  \iota^*(\theta(\Omega_{+}) \boxtimes \Omega_{+}) \rangle_{\Hil} \\
=&\langle e^{-\beta \tilde{H}} (\theta (h_+) \boxtimes h_+),  (\theta(\Omega_{+}) \boxtimes \Omega_{+}) \rangle_{\Hil_{-+}} \geq 0.
\end{align*}
Therefore $H$ has the RP property. 
\end{proof}

\subsection{An Interpretation}
Let us explain an elementary example: let  $M_+$  be isotopic to a cylinder, so $M$ is a torus. Take the graph $\Gamma$ to be a square lattice in $M$.  
For the Levin-Wen model on a torus $M$, it is known that the excitations in the bulk are objects of the Drinfeld center $Z(\C)$.
If $\Omega_+$ is the vacuum vector in $\Hil_+$, then $\iota^*(\theta(\Omega_{+}) \boxtimes \Omega_{+})$ is the vacuum vector in $\Hil$, namely all objects and morphisms are trivial. 

We can consider the expectation on the vacuum, namely $\iota^*(\theta(\Omega_{+}) \boxtimes \Omega_{+})$, as a path integral over configurations, where the Hamiltonian acts diagonally. These configurations can be identified as closed string nets on the dual lattice through the modular self-duality proved in \cite{LiuXu}, when $\C$ is a unitary modular tensor category. The RP condition for the path integral in the bulk induces a one-dimensional lower quantum theory on the boundary of $M_+$, which is a union of two circles.

If $\Omega_+$ is an open string with end points on the two boundary circles of $M_+$, then  $\iota^*(\theta(\Omega_{+}) \boxtimes \Omega_{+})$ is a closed string in $M$, corresponding to a bulk excitation. We can still consider the expectation on $\iota^*(\theta(\Omega_{+}) \boxtimes \Omega_{+})$ as a \emph{non-local} path integral. The RP condition for the path integral in the bulk induces a quantum theory topologically entangled on the two boundary circles. As mentioned in the introduction, we expect this realization to be useful in the study of the anomaly theory on the boundary.

\section{Acknowledgement}
{This research in the Mathematical Picture Language Project was supported by the Templeton Religion Trust under grant TRT 0159.}

\newpage
  \bibliographystyle{amsalpha}
\providecommand{\bysame}{\leavevmode\hbox to3em{\hrulefill}\thinspace}
\providecommand{\MR}{\relax\ifhmode\unskip\space\fi MR }
\providecommand{\MRhref}[2]{%
  \href{http://www.ams.org/mathscinet-getitem?mr=#1}{#2}
}
\providecommand{\href}[2]{#2}

\end{document}